\documentclass[a4paper,11pt]{article}
\usepackage{amsmath,amsfonts,amssymb,amsthm}
\usepackage{latexsym}
\usepackage{color}
\usepackage{graphicx} 

\usepackage{hyperref}
\usepackage{enumerate}

\usepackage[margin=2.5cm]{geometry}


\newtheorem{definition}{Definition}[section]
\newtheorem{lemma}[definition]{Lemma}
\newtheorem{proposition}[definition]{Proposition}
\newtheorem{theorem}[definition]{Theorem}

\definecolor{detailsgray}{gray}{0.3}




\def \cbb{\mathbb{C}}

\def \nbb{\mathbb{N}}
\def \rbb{\mathbb{R}}

\def \ccal {\mathcal{C}}
\def \dcal {\mathcal{D}}

\def \fcal {\mathcal{F}}

\def \hcal {\mathcal{H}}

\def \qcal {\mathcal{Q}}

\def \scal {\mathcal{S}}

\def \bfk  {\mathfrak{B}}

\newcommand{\thetav}{\boldsymbol{\theta}}
\newcommand{\etav}{\boldsymbol{\eta}}
\newcommand{\lambdav}{\boldsymbol{\lambda}}

\newcommand{\xiv}{\boldsymbol{\xi}}


\def \. { \,\! }

\def\clap#1{\hbox to 0pt{\hss#1\hss}}

\def\mathclap{\mathpalette\mathclapinternal}

\def\mathclapinternal#1#2{\clap{$\mathsurround=0pt#1{#2}$}}

\def \cdotarg { \, \cdot \, }


\newcommand{\id}{\operatorname{id}}
\newcommand{\idop}{\boldsymbol{1}}

\def \st {^\ast}

\def \restrict {\lceil}

\def \expltext#1 {\\ \text{\footnotesize{ (#1) }}\\}
\def \intercomm#1 {\\ \text{\footnotesize{ (#1) }}\\}
\def \undercomm#1 {\underset{\text{\scriptsize{ (#1) }}}}
\def \overcomm#1 {\overset{\text{\scriptsize{ (#1) }}}}


\newcommand{\rhs}[1]{\mathrm{r.h.s.}\eqref{#1}}


\newcommand{\Hil}{\mathcal{H}}

\newcommand{\fpn}{\Hil^{\mathrm{f}}}
\newcommand{\fpno}{\Hil^{\omega,\mathrm{f}}}
\newcommand{\fpnp}{P^\mathrm{f}}

\newcommand{\boundedops}{\bfk(\Hil)}

\newcommand{\qf}{\mathcal{Q}}



\newcommand{\hscalar}[2]{\langle #1 , #2 \rangle }
\newcommand{\bighscalar}[2]{\big\langle  #1 , #2 \big\rangle }


\newcommand{\gnorm}[2]{\lVert #1 \rVert_{#2}}

\newcommand{\onorm}[2]{\lVert #1 \rVert_{#2}^\omega}



\newcommand{\ad}{a^\dagger}
\newcommand{\zd}{z^\dagger}

\newcommand{\lvector}[2]{\boldsymbol{\ell}_{#1}(#2)}
\newcommand{\rvector}[2]{\boldsymbol{r}_{#1}(#2)}

\newcommand{\cmeA}[2]{f_{#1}^{\lbrack #2 \rbrack}}
\newcommand{\cmeB}[2]{f_{#1}{\lbrack #2 \rbrack}}
\newcommand{\cme}[2]{\mathchoice{\cmeA{#1}{#2}}{\cmeB{#1}{#2}}{\cmeB{#1}{#2}}{\cmeB{#1}{#2}}}
\newcommand{\cmelong}[2]{f_{#1} \left\lbrack #2 \right\rbrack}





\newcommand{\perms}[1]{\mathfrak{S}_{#1}}





\numberwithin{equation}{section}

\title{An operator expansion for integrable quantum field theories}
\author{Henning Bostelmann\thanks{%
University of York, Department of Mathematics, York YO10 5DD, United Kindom. 
E-mail: \href{mailto:henning.bostelmann@york.ac.uk}{\nolinkurl{henning.bostelmann@york.ac.uk}}
}
 \and Daniela Cadamuro\thanks{%
 Courant Research Centre ``Higher Order Structures'', University of G\"ottingen, Germany. 
E-mail: \href{daniela.cadamuro@theorie.physik.uni-goettingen.de}{\nolinkurl{daniela.cadamuro@theorie.physik.uni-goettingen.de}}
}
}

\date{November 20, 2012}

\begin{document}

\maketitle

\begin{abstract}

A large class of quantum field theories on 1+1 dimensional Minkowski space, namely, certain integrable models, has recently been constructed rigorously by Lechner. However, the construction is very abstract and the concrete form of local observables in these models remains largely unknown. Aiming for more insight into their structure, we establish a series expansion for observables, similar but not identical to the well-known form factor expansion. This expansion will be the basis for a characterization and explicit construction of local observables, to be discussed elsewhere. Here, we establish the expansion independent of the localization aspect, and analyze its behavior under space-time symmetries. We also clarify relations with deformation methods in quantum field theory, specifically, with the warped convolution in the sense of Buchholz and Summers.
\end{abstract}


\section{Introduction}

The central concept of relativistic quantum physics is the notion of \emph{local observables} \cite{Haa:LQP}. These are operators associated with space-time points or, more generally, bounded space-time regions, so that operators associated with spacelike separated regions commute. This concept is at the heart of the physical interpretation of theories; for example, it enables one to identify scattering states \cite{BuchholzSummers:2006} and analyze the charge structure of the system \cite{Doplicher:2010}.

It is an unfortunate fact that local observables are extremely hard to construct in models beyond the simplistic situation of free particles. With \emph{construction}, we refer here to a rigorous proof of existence in any mathematical framework, for example as Wightman distributions \cite{StrWig:PCT}, algebras of bounded operators \cite{Haa:LQP}, or closed unbounded operators affiliated with these; techniques are available for passing between these formulations \cite{BorYng:positivity,FreHer:pointlike_fields,RehWol:fields_localized,Bos:short_distance_structure}. We are not thinking of formal perturbation theory though. This construction problem for interacting models is still beyond reach in physical space-time dimensions. Progress has been made in simplified lower-dimensional models, notably through the results of Glimm and Jaffe \cite{GliJaf:quantum_physics}; but even in these models, local observables need to be constructed in a very intricate and, in places, rather indirect way.

Our interest here is in a particular class of models on 1+1-dimensional Minkowski space, namely those with a factorizing scattering matrix (also often called \emph{integrable} due to their classical counterparts). While these models are often considered in a thermodynamical context, we view them here as relativistic quantum field theories. 

The traditional approach used for the construction of local observables in these models is the form factor programme \cite{Smirnov:1992,BabujianFoersterKarowski:2006}. Here one expands expectation values of operators into a series of form factors (asymptotic scattering states). The form factors of local observables, which in this case take the form of pointlike localized quantum fields, are subject to an infinite set of constraints. Starting from these, the form factors of local fields have been computed explicitly in various models. The remaining problem is then to control the convergence of the (necessarily infinite) sum of form factors, in order to construct Wightman $n$-point functions. Despite some recent progress \cite{BabujianKarowski:2004}, the convergence of this series remains an open problem.

A radically different approach was proposed by Schroer \cite{Schroer:1999} and carried further by Lechner \cite{Lechner:2003,Lechner:2008}. Instead of dealing with local observables in bounded regions directly, they first construct observables associated with certain unbounded regions, specifically, wedges extending to spacelike infinity. Due to fewer constraints from localization, these operators are much easier to handle explicitly. Bounded regions, say, doubles cones, can be represented as the intersection of two wedges, and correspondingly, the set of local observables in a double cone is taken as the intersection of the operator sets associated with the wedges, on the level of von Neumann algebras. It is then a highly nontrivial task to show that this intersection contains more than just multiples of the identity; this problem was settled by Lechner \cite{Lechner:2008} using Tomita-Takesaki modular theory. From an abstract perspective, this constitutes a full construction of the model. However, while the existence of local observables is abstractly proven, very little is known about their concrete form, since the passage to von Neumann algebras hides an intricate limiting process that is difficult to trace.

Our aim here is to outline a programme that allows us to gain more insight into the structure of those local observables. The technical basis for this approach is a series expansion of the operators which is somewhat similar, but not identical, to the form factor expansion. Locality of operators can then be characterized via properties of the individual expansion terms.

In order to explain the idea, let us consider free field theory for a moment. In the theory of a real scalar free field, Araki has shown \cite{Ara:lattice} that any bounded operator $A$ on Fock space can be expanded into a series of normal-ordered annihilation and creation operators. For the two-dimensional massive free field and in a somewhat different notation from Araki, this expansion reads
\begin{equation}\label{eq:expansionfree}
  A = \sum_{m,n=0}^\infty \int \frac{d\thetav \, d\etav}{m!n!} f_{m,n}(\thetav,\etav) a^{\dagger}(\theta_1)\cdots a^\dagger(\theta_m) a(\eta_1)\cdots a(\eta_n),
\end{equation}
where $\ad,a$ are the usual annihilation and creation operators depending on rapidities $\theta_j$, $\eta_j$, and where the (generalized) functions $f_{m,n}$ are given by
\begin{equation}\label{eq:coefffree}
 f_{m,n}(\thetav,\etav) = \bighscalar{\Omega}{ [a(\theta_m), \ldots [a(\theta_1),[\ldots[A,\ad(\eta_n)]\ldots ,\ad(\eta_1)] \ldots ] \;\Omega  }.
\end{equation}
We will clarify technical details of the series in Sec.~\ref{sec:araki}. The expansion \eqref{eq:expansionfree} can be seen as a significant generalization of the well-known fact that every second-quantized operator can be written in the form $A=\int d\theta d\eta f(\theta,\eta) \ad(\theta)a(\eta)$. 

While the general expansion is valid for any $A$, whether localized or not, it is interesting to note that localization properties of $A$ are reflected in analyticity properties of $f_{m,n}$ and in bounds for this analytic continuation. In particular, if $A$ is localized in a bounded spacetime region, then the $f_{m,n}$ are entire analytic; this follows by writing $a,\ad$ in \eqref{eq:coefffree} in terms of Fourier transforms of time-zero fields. This and similar techniques have successfully been applied for establishing various phase space properties in the free field \cite{BucPor:phase_space,Bos:short_distance_structure,BostelmannDAntoniMorsella:2010}. 

Following a suggestion in \cite{Lashkevich:1994,SchroerWiesbrock:2000-1}, we propose to use a similar expansion for operators that is adapted to integrable models. In all what follows, we will restrict to theories that have the same particle spectrum as the real scalar free field. The proposed generalization of \eqref{eq:expansionfree} is then to replace the annihilators and creators $\ad,a$ with Zamolodchikov operators $\zd,z$ which depend on the given interaction:
\begin{equation}\label{eq:expansionz}
  A = \sum_{m,n=0}^\infty \int \frac{d\thetav \, d\etav}{m!n!} f_{m,n}(\thetav,\etav) z^{\dagger}(\theta_1)\cdots z^\dagger(\theta_m) z(\eta_1)\cdots z(\eta_n).
\end{equation}
Again, one would look for analyticity properties of the coefficients $f_{m,n}$ that reflect the locality of $A$.

In order to make these thoughts precise, our programme comprises the following steps. First, we need to prove that the ``deformed'' expansion \eqref{eq:expansionz} is valid for any $A$, and clarify the general properties of the expansion coefficients $f_{m,n}$, independent of a possible localization of $A$. Second, we want to characterize locality of an observable $A$ in terms of its expansion coefficients, giving necessary and sufficient conditions on $f_{m,n}$ that make $A$ local in a bounded region. A third step will be to  construct local observables $A$ explicitly by giving a sequence of expansion coefficients $f_{m,n}$, and showing that these fulfill the conditions mentioned previously. We note that we aim at examples of closable, possibly unbounded local operators $A$ here; but unlike the form factor programme, our goal are not Wightman $n$-point functions.

In the present paper, we deal only with the first mentioned problem. Leaving all aspects of locality aside, we will establish existence, uniqueness and general properties of the expansion \eqref{eq:expansionz}. We will deal with the characterization of locality \cite{BostelmannCadamuro:characterization-wip} and with concrete examples \cite{BostelmannCadamuro:examples-wip} elsewhere; see also \cite{Cadamuro:2012} for results in this direction.

Our present task is, in particular, to clarify the topological properties of the expansion: We need to show that every observable $A$ of a certain class can be expanded in a series as in \eqref{eq:expansionz}, and vice versa, that every set of (generalized) functions $f_{m,n}$ fulfilling certain regularity conditions defines via \eqref{eq:expansionz} an observable $A$ of the same class. Since the expansion itself involves the unbounded objects $z,\zd$, it is evident that the natural class of observables will not be bounded operators as in \cite{Ara:lattice}. Rather, we will establish the expansion for quadratic forms $A$ of a specific regularity class. These can be unbounded both at high particle numbers and at high energies, in a controlled way. We are thinking here in particular of the generalized $H$ bounds proposed by Jaffe \cite{Jaffe:1967}, although the present analysis is not restricted to these. In view of applications, one would also like to give conditions on the $f_{m,n}$ that guarantee an extension of the quadratic form $A$ to a closed, possibly unbounded operator. Sufficient conditions can in fact be found \cite[Ch.~4]{Cadamuro:2012}, and will be discussed elsewhere \cite{BostelmannCadamuro:examples-wip}.

Another important point is the behavior of the expansion coefficients under symmetry transformations of $A$. Specifically, we will see that the action of space-time reflections on the $f_{m,n}$ encodes the interaction of the model. While this aspect is interesting in its own right, it will also turn out to be crucial in the analysis of local observables in bounded regions \cite{BostelmannCadamuro:characterization-wip}.

We also want to investigate how the explicit formula \eqref{eq:coefffree} for the expansion coefficients generalizes to the interacting situation. A priori, this does not seem clear; while an explicit expression for $f_{m,n}$ can be given, see Eq.~\eqref{eq:fmndef} below, it bears little resemblance with the nested commutator formula that is valid in the free field situation. However, it turns out that in a certain class of models, Eq.~\eqref{eq:coefffree} can be generalized using a ``deformed commutator'' that depends on the interaction. To this end, we make use of the \emph{warped convolution} integral introduced by Buchholz, Summers and Lechner \cite{BuchholzSummers:2008,BuchholzSummersLechner:2011} which yields an alternative construction of certain 1+1 dimensional integrable models, using a deformation of the wedge-local observables. This also sheds light on the relation of our operator expansion to the deformation methods in quantum field theory recently used by several authors \cite{GrosseLechner:2007,Lechner:2011}.

The remainder of this article is organized as follows. We first introduce our mathematical setting, which is largely similar to \cite{Lechner:2008}, in Sec.~\ref{sec:preliminaries}. Then, in Sec.~\ref{sec:araki}, we establish the series expansion \eqref{eq:expansionz} and investigate its properties, in particular its behavior under spacetime symmetries. In Sec.~\ref{sec:warped}, we look at generalizations of the nested commutator formula \eqref{eq:coefffree}, and clarify relations with the warped convolution integral. We end with a brief outlook in Sec.~\ref{sec:conclusion}.

This article is based on the Ph.D.~thesis of one of the authors \cite{Cadamuro:2012}.

\section{Preliminaries}\label{sec:preliminaries}

We clarify some technical preliminaries and fix our notation, mostly following the setting of \cite{Lechner:2008}.
Throughout the paper, our spacetime is 1+1 dimensional Minkowski space $\rbb^2$ with bilinear form $x \cdot y = x_0 y_0-x_1 y_1$. The quantum field theoretic model that we will consider depends on a scattering function $S$ as a parameter, details of which will be discussed in a moment. We remark that for the special case $S=1$, all structures in the following reduce to the well-known situation on the Fock space of a free real scalar Bose field.
 
\subsection{Scattering functions}\label{sec:scatter}

We will first explain the properties of the two-particle scattering matrix of our integrable models. As in \cite{Lechner:2008}, we will restrict to theories with only one species of scalar particle. The two-particle scattering matrix is then just a complex-valued function, $S$, depending on a rapidity difference. Mathematically, we take it to be a smooth function $S: \rbb \to \cbb$ with the property
\begin{equation}
  \forall \theta\in \rbb: \quad S(\theta)^{-1}=S(-\theta)=\overline{S(\theta)}.
\end{equation}
It is clear that $|S(\theta)|=1$. Since we disregard all aspects of locality in this paper, we do not require an analytic or meromorphic continuation of $S$, and the properties usually demanded of this continuation, such as crossing symmetry, are not relevant for the present analysis.

For our purposes, a representation of the permutation group $\perms{n}$ of $n$ elements, related to the scattering function $S$, plays a crucial role. For any $\sigma \in \perms{n}$, we consider the following function $S^\sigma$ on $\rbb^n$,
\begin{equation}\label{eq:Sperm}
S^\sigma (\thetav) := \prod_{\substack{i<j \\ \sigma(i)>\sigma(j)}} S(\theta_{\sigma(i)}-\theta_{\sigma(j)}).
\end{equation}
These functions fulfill the composition law (cf.~\cite[p.~54]{Lechner:2006})
\begin{equation}\label{eq:Scompose}
 S^{\sigma\circ\rho} (\thetav) = S^{\sigma}(\thetav) S^\rho(\thetav^\sigma),
\end{equation}
where $\thetav^{\sigma}=(\theta_{\sigma(1)},\ldots,\theta_{\sigma(n)})$. We can then introduce an action $D_{n}$ of $\perms{n}$ on $L^{2}(\mathbb{R}^{n})$ by
\begin{equation}
(D_{n}(\sigma)f)(\thetav)= S^{\sigma}(\thetav)f(\thetav^{\sigma}),\quad \sigma \in \perms{n}.
\end{equation}
We use the same symbol for the corresponding actions on the space of compactly supported test functions, $\dcal(\rbb^n):=\ccal_0^\infty(\rbb^n)$, and on its dual, the space of distributions $\dcal(\rbb^n)'$. Using the composition law \eqref{eq:Scompose}, it follows that $D_{n}$ defines a group representation of $\perms{n}$ on those spaces, unitary in the case of $L^{2}(\mathbb{R}^{n})$.  
Then $P^{S}_{n}:=\frac{1}{n!}\sum_{\sigma \in \perms{n}}D_{n}(\sigma)$ is a projection onto the space of $D_n$-invariant, or \emph{$S$-symmetric}, functions. (In the $L^2$ case, $P^{S}_{n}$ is the unique orthogonal projection.) A function $f$ is $S$-symmetric if and only if
\begin{equation}
\forall \sigma \in \perms{n}: \quad f(\thetav)=S^{\sigma}(\thetav)f(\thetav^{\sigma});
\end{equation}
due to the representation property, it suffices to check this condition on transpositions $\sigma$. We alternatively write $P_n^S f (\thetav) = \operatorname{Sym}_S f (\thetav)$, and call $\operatorname{Sym}_S f$ the \emph{$S$-symmetric part} of $f$. If the function depends on several variables and we want to take the $S$-symmetric part only with respect to some of them, we will denote this as in $\operatorname{Sym}_{S,\thetav} f(\thetav,\thetav')$. The choice of variables for $S$-symmetrization can be of importance, as the formula
\begin{equation}\label{symstheta}
\operatorname{Sym}_{S,\thetav}\delta^{n}(\thetav-\thetav')=
\operatorname{Sym}_{S^{-1},\thetav'}\delta^{n}(\thetav-\thetav')
\end{equation}
shows.

\subsection{Hilbert space}\label{sec:hilbertspace}

As mentioned, we will focus our attention on models with only one species of scalar particle with mass $\mu >0$. As in the free real scalar field, our single particle space is then $\Hil_{1}=L^{2}(\mathbb{R},d\theta)$, where $\theta$ (``rapidity'') is related to the particle momentum by
\begin{equation}
p(\theta):=\mu \begin{pmatrix} \cosh\theta \\ \sinh\theta \end{pmatrix}, \quad \theta\in \mathbb{R}.
\end{equation}
Using the subspace of $S$-symmetric wave functions as introduced in Sec.~\ref{sec:scatter}, we define the $n$-particle space as $\Hil_{n}:= \operatorname{Sym}_S \Hil_{1}^{\otimes n}$, with $\Hil_0=\cbb\Omega$. The Hilbert space $\Hil$ of the theory is then the ``$S$-symmetrized Fock space'' over $\Hil_{1}$:
\begin{equation}
\Hil:=\bigoplus_{n=0}^{\infty}\Hil_{n}.
\end{equation}
We denote the orthogonal projection onto $\Hil_n\subset \Hil$ with $P_n$, and define $\fpnp_n:=\sum_{j=0}^{n}P_j$. Further, we denote the space of finite particle number states with $\fpn := \bigcup_{n}\fpnp_{n}\Hil$; it is dense in $\Hil$.

The space-time symmetry group of our 1+1-dimensional system is the proper Poincar\'e group, generated by space-time translations, a one-parameter family of boosts, and the space-time reflection. These symmetries act on $\Hil$ via a strongly continuous, (anti)unitary representation $U$ as follows. 
Space-time translations and boosts act on $\psi = \oplus_{n=0}^\infty \psi_n\in\Hil$ as
\begin{equation}\label{eq:translboostdef}
(U(x,\lambda)\psi)_{n}(\thetav):= e^{ i p(\thetav) \cdot x } \psi_{n}(\thetav-\lambdav),
\quad \text{where }\;p(\thetav) = \sum_{k=1}^{n}p(\theta_{k}), \;\; \lambdav = (\lambda,\ldots,\lambda),
\end{equation}
while the space-time reflection acts by an antiunitary operator $U(j)=:J$ as
\begin{equation}
(U(j)\psi)_{n}(\thetav):= \overline{\psi_{n}(\theta_{n},\ldots,\theta_{1})}.
\end{equation}
As usual, we denote the positive generator of time translations as $H$.

Apart from bounded operators on $\Hil$, we will also need to deal with unbounded quadratic forms. Their unboundedness can relate to their behavior at high particle numbers, but also at high energies, in a controlled way. Roughly speaking, we will allow high energy behavior like $e^{\omega(H/\mu)}$, where $\omega$ is a positive function, called the \emph{indicatrix}. In particular, we are thinking here of $\omega(p)$ growing slightly less than linearly in $p$, as proposed by Jaffe \cite{Jaffe:1967} for a high-energy behavior that is compatible with locality of operators. However, since for the present discussion we ignore all aspects of locality, we can allow a very generic indicatrix: In the following, we will only assume that $\omega: [0,\infty)\to [0,\infty)$ is smooth, monotonously increasing, and sublinear, i.e.,
\begin{equation}
 \forall p,q\geq 0:\quad \omega(p+q)\leq \omega(p)+\omega(q).
\end{equation}
Related to a fixed indicatrix $\omega$, we introduce the following subspaces of our Hilbert space $\Hil$. We denote $\Hil^\omega := \{\psi \in \Hil : \|e^{\omega(H/\mu)} \psi\| < \infty\}$. For fixed $n$, we write $\Hil^\omega_n := \Hil^\omega \cap \Hil_n$, and $\fpno := \Hil^{\omega} \cap \fpn$.  We note that $\fpno \subset \Hil^\omega$ is dense in $\Hil$. Also, let us set for test functions $g \in \dcal(\rbb^n)$,
\begin{equation}
  \onorm{g}{2} := \gnorm{\thetav \mapsto e^{\omega(E(\thetav))}g(\thetav)}{2},
\end{equation}
where $E$ is the dimensionless energy function,
\begin{equation}
   E(\thetav) := p_0(\thetav)/\mu = \sum_{j=1}^n \cosh \theta_j .
\end{equation}

We can now formalize the set of quadratic forms of interest. By $\qf^\omega$, we denote the space of quadratic (more precisely, sesquilinear) forms $A$ on $\fpno\times\fpno$, that is,
\begin{equation}
   A: \fpno \times \fpno \to \cbb, \quad  (\psi,\chi) \mapsto \hscalar{\psi}{ A \chi},
\end{equation}
such that the following norm is finite for any $n \in \nbb_0$:
\begin{equation}\label{eq:aomeganorm}
 \gnorm{A}{n}^{\omega} := \frac{1}{2} \gnorm{\fpnp_n A e^{-\omega(H/\mu)}\fpnp_n}{} + \frac{1}{2} \gnorm{\fpnp_n e^{-\omega(H/\mu)} A \fpnp_n}{}.
\end{equation}
We note that space-time translations and reflections act on $\qf^\omega$ by adjoint action of $U(x,0)$ and $J$, respectively, since these operators commute with $H$. The adjoint action of Lorentz boosts $U(0,\lambda)$ does not necessarily leave $\qf^\omega$ invariant, but maps $\qf^\omega$ into $\qf^{\omega'}$ with $\omega'(p) = \omega(cp)$, and where $c>0$ is chosen for fixed $\lambda$ such that $U(0,\lambda) H U(0,\lambda)^\ast \leq c H$. (It would be possible to modify the definition of $\qf^\omega$ so that it becomes fully Poincar\'e invariant, but we stick to the present definition for reasons of simplicity.)

\subsection{Generalized annihilation and creation operators} \label{sec:zgen}

Another crucial ingredient to our analysis is the representation of the Zamolodchikov-Faddeev algebra given by $z,\zd$, which is constructed as a deformed version of the usual CCR algebra as follows \cite{Lechner:2003}. For $f\in\Hil_1$, the operators $z(f)$, $\zd(f)$ are defined on $\fpn$ by 
\begin{align}
(\zd(f)\psi)_{n} &:= \sqrt{n}\,P^{S}_{n}(f\otimes \psi_{n-1}),\\
(z(f) \psi)_n &:= \sqrt{n+1} \int d\theta \, f(\theta)\, \psi_{n+1}(\theta,\cdotarg),
\end{align}
where $\psi=\oplus_{n} \psi_n \in \fpn$.  
These ``smeared'' annihilators and creators $\zd(f)$, $z(f)$ are adjoints of each other; more precisely, $z(f) = \zd(\overline{f})^{*}\restrict \fpn$. They are unbounded operators on $\fpn$, but their bounds on fixed particle number vectors can be controlled: In generalization of \cite[Eq.~(3.14)]{Lechner:2008}, we have for $n \in \nbb_0$ and $f \in \Hil_1$,
\begin{equation}\label{omegaz}
 \gnorm{ e^{\omega(H/\mu)} \zd(f) e^{-\omega(H/\mu)} \fpnp_n }{}
  \leq \sqrt{n+1} \onorm{f}{2},
\quad
 \gnorm{ e^{\omega(H/\mu)} z(f) e^{-\omega(H/\mu)} \fpnp_n }{}
  \leq \sqrt{n} \onorm{f}{2} 
\end{equation}
if the right-hand side is finite. (Monotonicity and sublinearity of $\omega$ enter here.)
The $z(f)$, $\zd(f)$ are linear in $f$, and we will often write them as integrals of formal kernels\footnote{%
Throughout this paper, we will usually denote distributions as integrals of formal kernels. This should be understood merely as a notational convention. It is convenient for us since in applications \cite{BostelmannCadamuro:characterization-wip}, many distributions will arise as boundary values of analytic functions.
}, $z^\#(f) = \int f(\theta) z^\#(\theta) d\theta$. In the sense of distributions, these kernels act on a wave function $\psi \in \hcal_n$ as
\begin{align}
(\zd(\theta)\psi)(\lambdav) &= \sqrt{n+1} \,\operatorname{Sym}_{S,\lambdav} \delta(\theta-\lambda_1)\psi(\lambda_2,\ldots,\lambda_{n+1}),\\
(z(\eta) \psi)(\lambdav) &= \sqrt{n} \,\psi(\eta,\lambdav).
\end{align}
They fulfill the relations of the Zamolodchikov-Faddeev algebra:
\begin{align}
\zd(\theta)\zd(\theta') &= S(\theta-\theta')\zd(\theta')\zd(\theta),\\
z(\eta)z(\eta') &= S(\eta-\eta')z(\eta')z(\eta),\\
z(\eta)\zd(\theta) &= S(\theta-\eta)\zd(\theta)z(\eta) + \delta(\theta-\eta) \idop.
\end{align}

In order to establish the expansion \eqref{eq:expansionz}, we require a multilinear extension of normal ordered products of the $z,\zd$, formally given by 
\begin{equation}
  z^{\dagger m} z^n(f) = \int d \thetav\, d \etav f(\thetav,\etav) \underbrace{\zd(\theta_1)\ldots \zd(\theta_m)z(\eta_1)\ldots z(\eta_n)}_{ =: z^{\dagger m} (\thetav) z^n (\etav)}.
\end{equation}
This is covered by our definitions so far if $f$ is of the form $f(\thetav,\etav)=f_1(\theta_1)\cdots f_{m+n}(\eta_n)$, or is a linear combination of such functions. Lechner \cite[Lemma 4.1.2]{Lechner:2006} extended the definition to arbitrary $f \in L^2(\rbb^{m+n})$.
We will need an even more general class of ``smearing functions'' $f$. To that end, 
for a distribution $f\in \dcal(\rbb^{m+n})'$, we introduce the (possibly infinite) norms\footnote{%
In a slight abuse of notation, here and in the following we will understand arguments like $e^{-\omega(E(\thetav))}f(\thetav,\etav)$ of norms to be distribution kernels
in the variables $\thetav$ and $\etav$, rather than function values at fixed $\thetav,\etav$.}
\begin{align}\label{eq:crossnorm}
\gnorm{f}{m \times n} &:= \sup \Big\{ \big\lvert \! \int f(\thetav,\etav) g(\thetav) h(\etav)  d\thetav d\etav \,\big\rvert : g \in \dcal(\rbb^m), \, h \in \dcal(\rbb^n), \,  \gnorm{g}{2} \leq 1, \, \gnorm{h}{2} \leq 1\Big\},
\\ \label{eq:omegacrossnorm}
\onorm{f}{m \times n} &:= \frac{1}{2} \gnorm{e^{-\omega(E(\thetav))}f(\thetav,\etav)}{m \times n} 
+ \frac{1}{2} \gnorm{f(\thetav,\etav)e^{-\omega(E(\etav))}}{m \times n}.
\end{align}
Here $\gnorm{f}{m \times n}$ can alternatively be understood as the operator norm of $f$ as the kernel of an integral operator from $L^2(\rbb^m)$ to $L^2(\rbb^n)$.
Let us note a few computation rules for the norms above. First, if $f_L\in \ccal^\infty(\rbb^m)$, $f_R\in \ccal^\infty(\rbb^n)$ are bounded, then
\begin{equation} \label{eq:mnnormfactor}
  \onorm{f_L(\thetav)f(\thetav,\etav)f_R(\etav)}{m \times n} \leq \gnorm{f_L}{\infty}  \onorm{f}{m \times n} \gnorm{f_R}{\infty}.
\end{equation}
(Namely, we can absorb $f_L, f_R$ into the test functions $g,h$, respectively.) Second, if $f \in \dcal(\rbb^{m+n})'$, $f' \in \dcal(\rbb^{m'+n'})'$, and if $f \cdot f' \in \dcal(\rbb^{m+m'+n+n'})'$ denotes the product in \emph{independent} variables, then 
\begin{equation} \label{eq:mnnormproduct}
   \onorm{f \cdot f'}{(m+m')\times (n+n')} \leq \onorm{f}{m \times n} \gnorm{f'}{m' \times n'}.
\end{equation}
(For $\omega=0$, this follows from the corresponding estimate for the norm of the tensor product of the associated integral operators \cite[Prop.~2.6.12]{KadRin:algebras1}. The general case can then be deduced from \eqref{eq:omegacrossnorm}, \eqref{eq:mnnormfactor} and monotonicity of $\omega$.)
Finally, it is clear that $\gnorm{f}{m \times n} \leq \gnorm{f}{2}$ and hence
\begin{equation}\label{eq:crossnormcomparison}
    \gnorm{f}{m \times n}^{\omega} \leq \frac{1}{2}\Big(\gnorm{e^{-\omega(E(\thetav))}f(\thetav,\etav)}{2}+\gnorm{e^{-\omega(E(\etav))}f(\thetav,\etav)}{2}\Big)
\end{equation}
if the right hand side is finite. However, equality does in general not hold: For $\omega=0$ and  $f(\theta_1,\theta_2) = \delta(\theta_1-\theta_2)$, we have $\onorm{f}{1 \times 1}=1$ but $f$ does not have finite $L^2$ norm.

We now define our multilinear annihilation and creation operators as follows.
For an arbitrary distribution $f \in \dcal(\rbb^{m+n})'$ and with vectors $\psi \in \Hil_k \cap \dcal(\rbb^k)$, $\chi \in \Hil_\ell \cap \dcal(\rbb^\ell)$, we set:
\begin{equation}\label{eq:zzqf}
   \hscalar{\chi}{ z^{\dagger m} z^n(f) \psi} :=
    \frac{\sqrt{k!(k-n+m)!}}{(k-n)!}
   \int d\lambdav\, d \thetav\, d \etav \, \overline{ \chi(\thetav,\lambdav)} f(\thetav,\etav) \psi(\eta_n\ldots\eta_1,\lambdav)
\end{equation}
if $\ell=k-n+m$ and $k\geq n$, and $=0$ otherwise. Because of the relation $(z(\eta)\psi)(\thetav)=\sqrt{k} \psi(\eta,\thetav)$, this extends the previous definition of the annihilators and creators. The question is now whether the quadratic form \eqref{eq:zzqf} can be extended to $\fpno \times \fpno$, or even to an (unbounded) operator on $\fpno$. A sufficient condition for that is $\gnorm{f}{m \times n}^\omega < \infty$, as the following proposition shows.

\begin{proposition} \label{proposition:zzdcrossnorm}

If $f \in \dcal(\rbb^{m+n})'$ with $\|f\|^{\omega}_{m \times n} < \infty$, then $z^{\dagger m} z^n(f)$ extends to an operator on $\fpno$, and for any $k \geq n$,
\begin{equation}
\big\|  z^{\dagger m} z^n(f)e^{- \omega(H/\mu)} \fpnp_k \big\| \leq 2 \frac{\sqrt{k!(k-n+m)!}}{(k-n)!}\|f\|^{\omega}_{m\times n}.\label{eq:cross}
\end{equation}
Moreover, for any $k \geq m,n$,
\begin{equation}
\gnorm{ z^{\dagger m} z^n(f) }{k}^\omega \leq 2 \frac{k!}{(k-\max(m,n))!}\|f\|^{\omega}_{m\times n}.\label{eq:zzdanorm}
\end{equation}
\end{proposition}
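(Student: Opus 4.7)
The plan is to estimate the defining matrix element \eqref{eq:zzqf} directly, from which both bounds and the operator extension to $\fpno$ follow. Since $z^{\dagger m}z^n(f)$ maps $\Hil_j$ into $\Hil_{j-n+m}$ for $j\geq n$ and annihilates $\Hil_j$ for $j<n$, the ranges of the restrictions $z^{\dagger m}z^n(f) e^{-\omega(H/\mu)} P_j$ to distinct particle-number blocks are pairwise orthogonal, so the operator norm in \eqref{eq:cross} equals the supremum of the per-block norms $\|z^{\dagger m}z^n(f) e^{-\omega(H/\mu)} P_j\|$ over $j\in\{n,\ldots,k\}$. It thus suffices to estimate $|\hscalar{\chi}{z^{\dagger m}z^n(f) e^{-\omega(H/\mu)}\psi}|$ with $\psi\in\Hil_j$ and $\chi\in\Hil_{j-n+m}$.

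Inserting \eqref{eq:zzqf} yields an integral of $\overline{\chi(\thetav,\lambdav)} f(\thetav,\etav) \psi(\eta_n,\ldots,\eta_1,\lambdav) e^{-\omega(E(\etav)+E(\lambdav))}$ with prefactor $\sqrt{j!(j-n+m)!}/(j-n)!$. Monotonicity of $\omega$ yields $e^{-\omega(E(\etav)+E(\lambdav))}\leq e^{-\omega(E(\etav))}$; absorbing the latter into $f$ produces the distribution $F_1(\thetav,\etav) := f(\thetav,\etav) e^{-\omega(E(\etav))}$. For each fixed $\lambdav$, the integral over $(\thetav,\etav)$ is then the pairing of $F_1$ against the $L^2$ functions $\thetav \mapsto \overline{\chi(\thetav,\lambdav)}$ and $\etav \mapsto \psi(\eta_n,\ldots,\eta_1,\lambdav)$, and is bounded by $\gnorm{F_1}{m\times n}\|\chi(\cdot,\lambdav)\|_2 \|\psi(\cdot,\lambdav)\|_2$. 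Cauchy--Schwarz in $\lambdav$ combined with $\int d\lambdav\, \|\chi(\cdot,\lambdav)\|_2^2 = \|\chi\|^2$ (and similarly for $\psi$) yields the factor $\|\chi\|\|\psi\|$; together with $\gnorm{F_1}{m\times n}\leq 2\onorm{f}{m\times n}$ and the fact that the prefactor $\sqrt{j!(j-n+m)!}/(j-n)!$ is monotonically increasing in $j$ (an elementary computation) and hence maximized at $j=k$, this establishes \eqref{eq:cross}.

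For \eqref{eq:zzdanorm}, the additional projection $\fpnp_k$ on the left further restricts the admissible block indices to $j\in\{n,\ldots,k-\max(0,m-n)\}$, so the supremum is attained at $j_{\max} = k-\max(0,m-n)$; a short case distinction on whether $m \leq n$ or $m>n$ shows that $\sqrt{j_{\max}!(j_{\max}-n+m)!}/(j_{\max}-n)! \leq k!/(k-\max(m,n))!$. The companion term $\fpnp_k e^{-\omega(H/\mu)}z^{\dagger m}z^n(f)\fpnp_k$ is handled by the analogous argument with $F_2(\thetav,\etav) := e^{-\omega(E(\thetav))}f(\thetav,\etav)$ in place of $F_1$ and the bound $e^{-\omega(E(\thetav)+E(\lambdav))}\leq e^{-\omega(E(\thetav))}$; the two halves combine via \eqref{eq:aomeganorm} to give \eqref{eq:zzdanorm}. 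The extension of $z^{\dagger m}z^n(f)$ to an operator on $\fpno$ is immediate from \eqref{eq:cross}: any $\psi\in\fpno$ lies in some $\fpnp_k\Hil$, and since $e^{\omega(H/\mu)}\psi\in\fpn\subset\Hil$, we have $z^{\dagger m}z^n(f)\psi = \bigl(z^{\dagger m}z^n(f) e^{-\omega(H/\mu)}\fpnp_k\bigr)(e^{\omega(H/\mu)}\psi)\in\Hil$. The main technical subtlety throughout is the use of the cross-norm $\gnorm{\cdot}{m\times n}$ with $L^2$ slices in place of the $\dcal$ test functions of \eqref{eq:crossnorm}; this is handled by noting that $\gnorm{F}{m\times n}$ coincides with the operator norm of the associated integral operator $L^2(\rbb^n)\to L^2(\rbb^m)$ and so extends by density from $\dcal$, together with standard Fubini-type measurability arguments for the slices in $\lambdav$.
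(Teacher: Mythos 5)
Your proof is correct and follows essentially the same route as the paper's: a direct Cauchy--Schwarz estimate of the matrix elements \eqref{eq:zzqf}, using the cross-norm in the $(\thetav,\etav)$ variables and $L^2$ in the spectator variables $\lambdav$, then combining the particle-number blocks via orthogonality of the ranges (the paper's ``Pythagoras'' step) and exploiting monotonicity of the combinatorial prefactor, with the same case distinction $m\leq n$ versus $m>n$ for \eqref{eq:zzdanorm}. The only cosmetic difference is that for the second summand in \eqref{eq:aomeganorm} the paper passes to the adjoint $z^{\dagger n}z^m(f^\ast)$, whereas you rerun the estimate with the weight $e^{-\omega(E(\thetav))}$ absorbed on the $\chi$ side; both are equally valid.
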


\begin{proof}

For $\psi \in\Hil_{k} \cap \dcal(\rbb^k)$ and $\chi\in \Hil_{\ell}\cap \dcal(\rbb^\ell)$, with $\ell=k-m+n$, one has from \eqref{eq:zzqf},
\begin{equation}\label{eq:zzmatrixelm}
\begin{aligned}
\left| \hscalar{ \chi }{  z^{\dagger m} z^n(f) \psi } \right|
& \leq \frac{\sqrt{k!(k-n+m)!}}{(k-n)!}\int d\lambdav \left| \int d\thetav d\etav\; \overline{\chi(\thetav,\lambdav)}\psi(\eta_n\ldots\eta_1,\lambdav)f(\thetav,\etav)\right|\\
& \leq  2 \frac{\sqrt{k!(k-n+m)!}}{(k-n)!} \gnorm{f}{m \times n}^{\omega} \gnorm{\chi}{2} \Big(\int d\lambdav\, d\etav\; |\psi(\etav,\lambdav)|^{2}e^{2\omega(E(\etav))} \Big)^{1/2} ,
\end{aligned}
\end{equation}
where we used \eqref{eq:omegacrossnorm} and the Cauchy-Schwarz inequality. Employing monotonicity of $\omega$ in the remaining integrand, we arrive at
\begin{equation}
\left| \hscalar{ \chi }{  z^{\dagger m} z^n(f) \psi } \right|
 \leq 2 \frac{\sqrt{k!(k-n+m)!}}{(k-n)!}\onorm{f}{m \times n}  \gnorm{\chi}{2} \onorm{\psi}{2}.
\end{equation}
Since $\psi$ and $\chi$ were chosen from dense sets in the corresponding  spaces, and since the matrix elements \eqref{eq:zzmatrixelm} vanish if $\ell \neq k-n+m$, we can extend $z^{\dagger m} z^n(f)$ to a bounded operator on $\hcal_k^\omega$ with norm
\begin{equation}\label{eq:pkbound}
\left\|z^{\dagger m} z^n(f)P_{k}e^{-\omega(H/\mu)}  \right\|\leq 2\frac{\sqrt{k!(k-n+m)!}}{(k-n)!}\gnorm{f}{m\times n}^{\omega}.
\end{equation}
This works for any $k$. For $k\neq k'$, the images of $z^{\dagger m} z^n(f)P_{k}$ and $z^{\dagger m} z^n(f)P_{k'}$ are orthogonal; thus \eqref{eq:cross} follows from \eqref{eq:pkbound} using Pythagoras' theorem.

For \eqref{eq:zzdanorm}, we deduce from \eqref{eq:cross} that
\begin{equation}\label{eq:zzdenergynorm}
 \gnorm{\fpnp_k z^{\dagger m} z^n(f) e^{-\omega(H/\mu)} \fpnp_k}{} \leq 2 \frac{k!}{(k-\max(m,n))!}  \gnorm{f}{m\times n}^{\omega};
\end{equation}
this is best seen when considering the cases $m > n$ and $m \leq n$ separately.
Further, we note that, in the sense of quadratic forms, $(z^{\dagger m} z^n(f))^\ast = z^{\dagger n} z^m(f^\ast)$, where $f^\ast(\thetav,\etav) = \overline{f(\eta_m,\ldots,\eta_1,\theta_n,\ldots,\theta_1)}$, and where one finds $\onorm{f^\ast}{n \times m} = \onorm{f}{m \times n}$.
An application of \eqref{eq:zzdenergynorm} then yields
\begin{equation}
 \gnorm{\fpnp_k e^{-\omega(H/\mu)} z^{\dagger m} z^n(f)  \fpnp_k}{}
 = \gnorm{\fpnp_k z^{\dagger n} z^m(f^\ast) e^{-\omega(H/\mu)} \fpnp_k}{}
\leq 2 \frac{k!}{(k-\max(m,n))!}  \gnorm{f}{m\times n}^{\omega},
\end{equation}
and thus \eqref{eq:zzdanorm} is proven.
\end{proof}

\section{The operator expansion} \label{sec:araki}

In this section, we establish the proposed series expansion \eqref{eq:expansionz} in a rigorous fashion; the main result is Theorem~\ref{theorem:arakiexpansion}. As announced, we also deal with the action of symmetry transformations on the expansion coefficients.

\subsection{Contracted matrix elements}

As a first step, we explain our notation for \emph{contractions}, similar to \cite{Lechner:2008} but with slightly different conventions. Contractions are devices for denoting variants of the formal matrix element
\begin{equation}
 \hscalar{\zd(\theta_1) \cdots \zd(\theta_m) \Omega}{A \zd(\eta_n) \cdots \zd(\eta_1) \Omega}
  =: \hscalar{\lvector{}{\thetav} }{ A \rvector{}{\etav}}.
\end{equation}
A contraction $C$ is a triple, $C=(m,n,\{(l_1,r_1),\ldots,(l_{k},r_k)\})$, where $m,n\in\nbb_0$, $1 \leq l_j \leq m$ and $m+1 \leq r_j \leq m+n$, and both the $l_j$ and the $r_j$ are pairwise different among each other. We denote $\ccal_{m,n}$ the set of all contractions for fixed $m$ and $n$, and write $|C|:=k$ for the \emph{length} of the contraction; here we allow the case $|C|=0$ (the \emph{empty contraction}). The $(l_j,r_j)$ will be called pairs of contracted indices. Contractions are used to label modified (``contracted'') matrix elements $\hscalar{\lvector{C}{\thetav}}{ A \rvector{C}{\etav}}$,
where
\begin{align}
   \lvector{C}{\thetav} &:= \zd(\theta_1) \cdots \widehat{\zd(\theta_{l_1})} \cdots \widehat{\zd(\theta_{l_{|C|}})} \cdots \zd(\theta_m) \Omega,
\\
   \rvector{C}{\etav} &:= \zd(\eta_n) \cdots \widehat{\zd(\eta_{r_1-m})} \cdots \widehat{\zd(\eta_{r_{|C|}-m})} \cdots \zd(\eta_1) \Omega,
\end{align}
and where the hats indicate that the marked elements have been left out of the sequence. 

At this point, a remark about the well-definedness of $\hscalar{\lvector{C}{\thetav}}{ A \rvector{C}{\etav}}$ as a distribution is in order. First, one can understand
 $\lvector{C}{\cdotarg}$ as an $\Hil$-valued distribution with values in $\fpno$. More precisely, for $f\in\dcal(\rbb^{m-|C|})$, and with $\hat\thetav\in\rbb^{m-|C|}$ denoting $\thetav$ with the components $\theta_{\ell_j}$ left out, the expression $\lvector{C}{f}=\int d\hat\thetav \, f(\hat\thetav) \lvector{C}{\thetav}$ defines a vector in $\hcal_{m-|C|}$, and
\begin{equation}\label{eq:lcnorm}
   \gnorm{e^{\omega(H/\mu)} \lvector{C}{f}}{} \leq \sqrt{(m-|C|)!}\onorm{f}{2},
\end{equation}
cf.~\eqref{omegaz}. This holds for $\rvector{C}{\cdotarg}$ analogously. 
Now let $A \in \qf^\omega$. In view of \eqref{eq:aomeganorm} and \eqref{eq:lcnorm}, the map $(f,g) \mapsto \hscalar{\lvector{C}{\bar f}}{ A \rvector{C}{g}}$ is well-defined and continuous in the topology of $\dcal(\rbb^{m-|C|}) \times \dcal(\rbb^{n-|C|})$. Hence, we can understand 
$\hscalar{\lvector{C}{\thetav}}{ A \rvector{C}{\etav}}$ as the kernel of this distribution.

With a contraction $C\in \ccal_{m,n}$, we associate the quantities
\begin{align}
\delta_{C}(\thetav,\etav) &:= \prod_{j=1}^{|C|}\delta(\theta_{l_j}-\eta_{r_j-m}), \label{eq:deltac}\\
 S_C(\thetav,\etav) &:= \Big(\prod_{j=1}^{|C|}\prod_{p_{j}=l_{j}+1}^{r_{j}-1}S^{(m)}_{p_{j},l_{j}}\Big) \prod_{\substack{r_{i}<r_{j} \\ l_{i}<l_{j}}} S^{(m)}_{l_{j},r_{i}},\label{eq:sc}
\end{align}
where we used the following shorthand notation for $\xiv\in\rbb^{m+n}$,
\begin{equation}
S_{a,b}(\xiv):=S(\xi_{a}-\xi_{b}),\quad 
 S_{a,b}^{(m)} := \begin{cases} S_{b,a}(\xiv) \quad \text{if }a \leq m <b \text{ or } b\leq m <a, \\S_{a,b} (\xiv)\quad \text{otherwise.}\end{cases} 
\end{equation}
We will often leave away the arguments $\thetav,\etav$ where they are clear from the context. We will see the use of the above expressions later.

It is a useful fact that we can express the factors $S_C$ in terms of the expressions $S^\sigma$ associated with permutations $\sigma$.

\begin{lemma}\label{lemma:scpermute}
 There holds
\begin{equation}\label{Scontrperm}
 \delta_{C} S_C(\thetav,\etav) = \delta_{C}S^{\sigma}(\thetav)S^{\rho}(\etav),
\end{equation}
where
\begin{equation}\label{eq:scpermutations}
 \begin{aligned}
\sigma=\begin{pmatrix} 1 & & \ldots & &  & m \\ 1 & \ldots \;\hat{l}\; \ldots & m &  l_{1} & \ldots & l_{|C|} \end{pmatrix}, \quad
\rho=\begin{pmatrix} m+1 & & & \ldots & & m+n \\ r_{|C|} & \ldots & r_{1} & m+1 & \ldots \; \hat{r}\; \ldots & m+n \end{pmatrix}.
 \end{aligned}
\end{equation}
\end{lemma}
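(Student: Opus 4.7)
The plan is to verify this identity as an equation of distributions by expanding both sides into elementary factors $S(\xi_a-\xi_b)$ and matching them carefully, using the distributional identity $\theta_{l_j} = \eta_{r_j-m}$ provided by $\delta_C$ and the unitarity relation $S(x)S(-x)=1$.

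First I would enumerate the inversions of $\sigma$ as defined in \eqref{eq:scpermutations}. Since $\sigma$ places the non-contracted indices $\{1,\ldots,m\} \setminus \{l_1,\ldots,l_{|C|}\}$ at positions $1,\ldots,m-|C|$ in increasing order and the contracted indices $l_1,\ldots,l_{|C|}$ (in the given listing order) at positions $m-|C|+1,\ldots,m$, the inversions split into two classes: pairs $(i',j')$ with $i'<j'$ but $l_{i'}>l_{j'}$ (contributing $S(\theta_{l_{i'}}-\theta_{l_{j'}})$), and pairs consisting of a non-contracted index $a$ with some $l_{j'}$ satisfying $a>l_{j'}$ (contributing $S(\theta_a - \theta_{l_{j'}})$). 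The same analysis applied to $\rho$ (reinterpreted as acting on $\{1,\ldots,n\}$ by a shift) yields inversions among the $r$'s corresponding to non-crossing pairs of contractions (i.e.\ pairs with $j'<i'$ and $r_{j'}<r_{i'}$, giving $S(\eta_{r_{i'}-m}-\eta_{r_{j'}-m})$) together with mixed pairs contributing $S(\eta_{r_{i'}-m}-\eta_{c-m})$ for non-contracted $c$ with $r_{i'}>c$.

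Next I would expand $S_C$ from its definition \eqref{eq:sc}, splitting the inner product $\prod_{p_j=l_j+1}^{r_j-1} S^{(m)}_{p_j,l_j}$ according to whether $p_j \le m$ (a $\theta$-factor $S(\theta_{p_j}-\theta_{l_j})$, further subdivided by $p_j \in L := \{l_k\}$ or not) or $p_j > m$ (an $\eta$-factor $S(\theta_{l_j}-\eta_{p_j-m})$, subdivided by $p_j-m \in R':=\{r_k-m\}$ or not), and the second product contributing $S(\eta_{r_i-m}-\theta_{l_j})$ for each non-crossing pair of contractions. After applying $\delta_C$, the factors involving non-contracted indices match the mixed inversions of $\sigma$ and $\rho$ directly. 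For the remaining factors, which after substitution involve only contracted $\theta$'s, a case distinction according to the relative order of $(l_a,l_b)$ and $(r_a,r_b)$ (crossing vs.\ non-crossing contraction pairs) shows that in the crossing case the three relevant contributions of $S_C$ combine via $S(x)S(-x)=1$ to give $1$, matching the RHS where neither $\sigma$ nor $\rho$ contributes; and in the non-crossing case the doubled contribution from $S_C$ is reduced by the opposite-signed factor from the second product to a single $S(\theta_{l_{\max}}-\theta_{l_{\min}})$, matching the unique contribution of $\rho$ on the RHS.

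The main obstacle is the bookkeeping: one must track the orientation of each argument in the definition of $S^{(m)}_{a,b}$, and verify that after $\delta_C$-substitution every elementary $S$-factor on each side is accounted for precisely once (modulo the $S(x)S(-x)=1$ cancellations). An alternative, possibly cleaner, route would be induction on $|C|$: starting from $|C|=0$ where both sides are $1$, one adjoins one pair $(l_k,r_k)$ and uses the composition law \eqref{eq:Scompose} to show that the factors by which $S^\sigma$ and $S^\rho$ are multiplied on passing from the $(|C|-1)$-contraction to the $|C|$-contraction precisely agree with the new factors appearing in $S_C$. However, because inserting a new pair shifts the permutations $\sigma$ and $\rho$ nontrivially, the direct combinatorial matching described above seems most transparent.
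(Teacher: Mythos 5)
Your proposal is correct and follows essentially the same route as the paper: there, too, one normalizes the listing order of the contracted pairs (say $r_1<\dots<r_{|C|}$, using the remark that \eqref{Scontrperm} is independent of this choice), reads off $S^{\sigma}$ and $S^{\rho}$ from the definition \eqref{eq:Sperm}, and matches the resulting elementary factors against those of $S_C$ on the support of $\delta_C$, with the crossing/non-crossing cancellation left as ``a short computation''. Two cosmetic quibbles: in the crossing case only two (not three) factors of $S_C$ survive and cancel via $S(x)S(-x)=1$, and your identification of the $\rho$-inversions with ``non-crossing pairs'' (and of the surviving non-crossing factor as coming uniquely from $\rho$) presupposes the $r$-ordering of the pairs, so you should fix that normalization explicitly at the outset.
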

\noindent
\emph{Remarks}: $\hat l$ stands for leaving out the $l_j$ from the sequence; $\hat r$ analogously. The permutations $\sigma,\rho$ are not unique since one can permute the pairs of contracted indices. However, due to the factor $\delta_C$, the right hand side of \eqref{Scontrperm} is independent of this choice.

\begin{proof}

By the above remark, we can assume that $r_{1} < \ldots < r_{|C|} $. From the definition of $S^\sigma$, $S^\rho$ in Eq.~\eqref{eq:Sperm}, we read off that
\begin{equation}
 S^{\sigma} = \prod_{j=1}^{|C|}\prod_{p_{j}=l_{j}+1}^{m}S_{p_{j},l_{j}} \cdot \prod_{\substack{i<j \\ l_{i}<l_{j}}} S_{l_{i},l_{j}}, \quad
 S^{\rho}   = \prod_{j=1}^{|C|}\prod_{q_{j}=m+1}^{r_{j}-1}S_{r_{j},q_{j}}.
\end{equation}
Taking the factor $\delta_C$ into account, a short computation shows that
$\delta_{C} S^{\sigma} S^{\rho} = \delta_{C} S_C$ with $S_C$ defined as in \eqref{eq:sc}.
\end{proof}

We will often need to consider \emph{compositions} of contractions. For $C \in \ccal_{m,n}$ and $C' \in \ccal_{m-|C|,n-|C|}$, we denote the composed contraction -- with indices contracted first with $C$, then with $C'$ -- as $C \dot \cup C' \in \ccal_{m,n}$. (The definition should intuitively be clear, but note that it involves a \emph{renumbering} of the indices in $C'$ before taking the set union; we will often avoid to indicate the renumbering explicitly.) The factors $\delta_C$ and $S_C$ compose as follows in this situation. Here $\hat \thetav \in \rbb^{m-|C|}$ is $\thetav$ with the components $\theta_{l_1},\ldots,\theta_{l_{|C|}}$ left out; analogously for $\hat \etav$.

\begin{lemma}\label{lemma:contractcompose}

Let $C \in \ccal_{m,n}$ and $C'\in \ccal_{m-|C|,n-|C|}$. There holds
\begin{equation}
\delta_{C}(\thetav,\etav) \delta_{C'} (\hat\thetav,\hat\etav) S_C(\thetav,\etav) S_{C'}(\hat\thetav,\hat\etav) = \delta_{C\dot{\cup}C'}(\thetav,\etav) S_{C\dot{\cup}C'}(\thetav,\etav) \label{scc}.
\end{equation}
\end{lemma}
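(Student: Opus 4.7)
The strategy is to apply Lemma~\ref{lemma:scpermute} to convert every $S_{C}$-type factor into a product $S^{\sigma}(\thetav)S^{\rho}(\etav)$ for suitable permutations, and then to match the two sides by means of the composition law \eqref{eq:Scompose}.

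First I would dispose of the delta factors. Directly from the definition \eqref{eq:deltac}, $\delta_{C}(\thetav,\etav)$ enforces the pairings recorded in $C$, while $\delta_{C'}(\hat\thetav,\hat\etav)$ enforces the pairings recorded in $C'$, reading the reduced variables $\hat\thetav,\hat\etav$ as the non-contracted subsequences of $\thetav,\etav$. After the implicit renumbering involved in forming $C\dot\cup C'$, these together enforce exactly the pairings of $C\dot\cup C'$, so
\begin{equation}
   \delta_{C}(\thetav,\etav)\,\delta_{C'}(\hat\thetav,\hat\etav) = \delta_{C\dot\cup C'}(\thetav,\etav).
\end{equation}

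Next I would handle the $S$-factors. By Lemma~\ref{lemma:scpermute}, on the support of $\delta_{C}$ one has $S_{C}(\thetav,\etav) = S^{\sigma}(\thetav)S^{\rho}(\etav)$, with $\sigma,\rho$ as in \eqref{eq:scpermutations}. Analogously, on the support of $\delta_{C'}$, $S_{C'}(\hat\thetav,\hat\etav) = S^{\sigma'}(\hat\thetav)S^{\rho'}(\hat\etav)$ for the corresponding permutations $\sigma',\rho'$ acting on $\{1,\ldots,m-|C|\}$ and $\{1,\ldots,n-|C|\}$ respectively. Finally, $\delta_{C\dot\cup C'}S_{C\dot\cup C'} = \delta_{C\dot\cup C'}S^{\sigma''}(\thetav)S^{\rho''}(\etav)$ for the permutations associated with $C\dot\cup C'$.

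The heart of the argument is to choose $\sigma''$ as the composition $\tilde\sigma'\circ\sigma$, where $\tilde\sigma'\in\perms{m}$ extends $\sigma'$ by the identity on the last $|C|$ slots. Since $\sigma$ moves the contracted positions $l_{1},\ldots,l_{|C|}$ to the end, we have $\thetav^{\sigma} = (\hat\thetav,\theta_{l_{1}},\ldots,\theta_{l_{|C|}})$, and $\tilde\sigma'$ acts only on the first $m-|C|$ entries. Applying the composition law \eqref{eq:Scompose},
\begin{equation}
   S^{\sigma''}(\thetav) = S^{\sigma}(\thetav)\,S^{\tilde\sigma'}(\thetav^{\sigma}) = S^{\sigma}(\thetav)\,S^{\sigma'}(\hat\thetav),
\end{equation}
where the last equality uses that $\tilde\sigma'$ is trivial on the last $|C|$ slots (so those factors do not contribute to $S^{\tilde\sigma'}$) and that the first $m-|C|$ entries of $\thetav^{\sigma}$ are precisely $\hat\thetav$. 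The analogous identity $S^{\rho''}(\etav) = S^{\rho}(\etav)\,S^{\rho'}(\hat\etav)$ holds on the $\etav$-side. Combining everything with the delta identity above yields \eqref{scc}.

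The main obstacle is purely combinatorial: verifying that the permutation $\sigma''$ associated with $C\dot\cup C'$ can indeed be chosen as $\tilde\sigma'\circ\sigma$, which requires careful bookkeeping of the renumbering in the definition of $\dot\cup$. Once one is comfortable that the contracted positions of $C\dot\cup C'$, after applying $\sigma$, occupy exactly the positions permuted by $\tilde\sigma'$, the rest is a direct application of \eqref{eq:Scompose}, exploiting the non-uniqueness of the permutations in Lemma~\ref{lemma:scpermute} (permuting the pairs of contracted indices) to match conventions.
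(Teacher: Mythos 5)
Your proposal is correct and follows essentially the same route as the paper's proof: dispose of the delta factors directly from the definition, factor each $S_C$ via Lemma~\ref{lemma:scpermute}, extend $\sigma'$ to an element of $\perms{m}$ acting as the identity on the contracted slots that $\sigma$ has moved to the end, and combine using the composition law \eqref{eq:Scompose}. (The only quibble is the order in which you write the composition: with the paper's convention $\thetav^{\sigma}=(\theta_{\sigma(1)},\ldots,\theta_{\sigma(m)})$, the permutation associated with $C\dot{\cup}C'$ is $\sigma\circ\tilde\sigma'$ rather than $\tilde\sigma'\circ\sigma$, but the identity $S^{\sigma''}(\thetav)=S^{\sigma}(\thetav)S^{\sigma'}(\hat\thetav)$ you derive is exactly the one the paper uses.)
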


\begin{proof}
It is clear from the definition \eqref{eq:deltac} that $\delta_{C}\delta_{C'}=\delta_{C \dot\cup C'}$.
Using this fact and Lemma~\ref{lemma:scpermute}, it remains to show that
\begin{equation}\label{Scomposperm}
  S^{\sigma}(\thetav)S^{\sigma'}(\hat \thetav ) = S^{\sigma''}(\thetav),
\quad
  S^{\rho}(\etav) S^{\rho'}(\hat \etav ) = S^{\rho''}(\etav),
\end{equation}
where $\sigma,\rho$, $\sigma',\rho'$, $\sigma'',\rho''$ are permutations associated by Eq.~\eqref{Scontrperm} with $C$, $C'$, and $C\dot\cup C'$, respectively.
We note that $\sigma'$ is given explicitly by
\begin{equation}
\sigma'=\begin{pmatrix} 1 & & \ldots \hat{l} \ldots &  & & m \\ 1 & \ldots \hat{l}\; \hat{l'} \ldots & m &  l'_{1} & \ldots & l'_{|C'|} \end{pmatrix} \in \perms{m-|C|}.
\end{equation}
We can consider $\sigma'$ as an element of $\perms{m}$ by extending the permutation matrix as follows:
\begin{equation}
\sigma'=\begin{pmatrix} 1 & & \ldots \hat{l} \ldots & &  & m  & l_{1} & \ldots & l_{|C|} \\ 1 & \ldots \hat{l'}\; \hat{l} \ldots & m  &  l'_{1} & \ldots & l'_{|C'|} & l_{1} & \ldots & l_{|C|}  \end{pmatrix}.
\end{equation}
With this identification, $S^{\sigma'}(\hat\thetav)=S^{\sigma'}(\thetav^{\sigma})$, and one notes that $\sigma'' := \sigma \circ \sigma'$ is indeed associated with $C \dot\cup C'$ by Eq.~\eqref{Scontrperm}. By the composition law in Eq.~\eqref{eq:Scompose}, one has
\begin{equation}
 S^{\sigma''}(\thetav) = S^{\sigma}(\thetav) S^{\sigma'}(\thetav^\sigma) = S^{\sigma}(\thetav) S^{\sigma'}(\hat\thetav).
\end{equation}
Analogously one obtains the second part of Eq.~\eqref{Scomposperm}, and hence the result.
\end{proof}

For any quadratic form $A \in \mathcal{Q}^{\omega}$, we now define its \emph{fully contracted matrix elements}
$\cme{m,n}{A}$ by
\begin{equation}\label{eq:fmndef}
\cme{m,n}{A}(  \thetav,  \etav) :=
\sum_{C\in \ccal_{m,n}} (-1)^{|C|} \delta_C \, S_C (\thetav,\etav) \,
\hscalar{ \lvector{C}{\thetav} }{ A \,\rvector{C}{\etav} }.
\end{equation}
These $\cme{m,n}{A}$ will turn out to be the expansion coefficents in the expansion \eqref{eq:expansionz}. They are very similar to the quantities $\langle \cdotarg \rangle_{m+n,m}^\text{con}$ introduced in \cite{Lechner:2008};
in notation used there, we have $\cme{m,n}{A} = \langle J A^\ast J \rangle_{m+n,m}^\text{con}$.

It is evident that $\cme{m,n}{A}$ are distributions in $\dcal(\rbb^{m+n})'$. However, we can show more: the norms $\gnorm{\cme{m,n}{A}}{m\times n}^{\omega}$ are finite.

\begin{proposition}\label{proposition:fmnbound}
 For $m,n \in \nbb_0$, there is a constant $c_{mn}$ such that for all $A \in \qf^{\omega}$,
\begin{equation}
\onorm{\cme{m,n}{A} }{m\times n} \leq c_{mn} \onorm{A}{m+n}.
\end{equation}
\end{proposition}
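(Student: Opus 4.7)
The approach is to estimate each of the finitely many summands in the definition \eqref{eq:fmndef} of $\cme{m,n}{A}$ separately against $\onorm{A}{m+n}$, then sum the resulting constants. Since the $\onorm{\cdotarg}{m\times n}$-norm is the average of two terms (one with $e^{-\omega(E(\thetav))}$, one with $e^{-\omega(E(\etav))}$), I will focus on the first one; the second is handled by an entirely analogous argument using the adjoint relation $(\cme{m,n}{A})^\ast = \cme{n,m}{A^\ast}$-type symmetry together with $A \in \qf^\omega$.

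Fix $C \in \ccal_{m,n}$ with $|C|=k$. Write $\Phi_C(\hat\thetav,\hat\etav) := \hscalar{\lvector{C}{\thetav}}{A\rvector{C}{\etav}}$, which depends only on the uncontracted variables $\hat\thetav \in \rbb^{m-k}$ and $\hat\etav \in \rbb^{n-k}$, as noted in the discussion after \eqref{eq:lcnorm}. To bound $\gnorm{e^{-\omega(E(\thetav))} \delta_C S_C \Phi_C}{m \times n}$, I pair against test functions $g,h$ with $\gnorm{g}{2}, \gnorm{h}{2} \leq 1$. Using $|S_C|\equiv 1$ and integrating out $\delta_C$ via the reparametrization $\thetav = \thetav(\hat\thetav,\xiv)$, $\etav = \etav(\hat\etav,\xiv)$ with the contracted variables $\xiv \in \rbb^k$, the resulting integral factorizes over the $\xiv$-integration: by Cauchy-Schwarz in $\xiv$, together with $E(\thetav) = E(\hat\thetav) + E(\xiv)$ and $e^{-\omega(E(\xiv))} \leq 1$, one gets an estimate by
\begin{equation*}
 \int d\hat\thetav\, d\hat\etav \; e^{-\omega(E(\hat\thetav))} |\Phi_C(\hat\thetav,\hat\etav)|\; G(\hat\thetav) H(\hat\etav),
\end{equation*}
where $G(\hat\thetav) := \gnorm{g(\thetav(\hat\thetav,\cdot))}{L^2(\rbb^k)}$ and similarly $H$. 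By Fubini, $\gnorm{G}{2} \leq \gnorm{g}{2} \leq 1$ and $\gnorm{H}{2}\leq 1$.

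The key step is to recognize the remaining expression as a matrix element of $A$. Setting $\tilde G := e^{-\omega(E(\cdotarg))} \overline G$ (so $\onorm{\tilde G}{2} = \gnorm{G}{2} \leq 1$), the integral equals $\hscalar{\lvector{C}{\overline{\tilde G}}}{A \, \rvector{C}{H}}$ up to complex conjugation of $G$, which is irrelevant for the modulus. Now $\lvector{C}{\overline{\tilde G}} \in \hcal_{m-k}$ and $\rvector{C}{H} \in \hcal_{n-k}$, so by the definition \eqref{eq:aomeganorm} of $\onorm{A}{\cdotarg}$,
\begin{equation*}
 \big|\hscalar{\lvector{C}{\overline{\tilde G}}}{A\, \rvector{C}{H}}\big|
   \leq 2\onorm{A}{\max(m-k,n-k)}\;\gnorm{e^{\omega(H/\mu)}\lvector{C}{\overline{\tilde G}}}{}\;\gnorm{\rvector{C}{H}}{}.
\end{equation*}
Applying \eqref{eq:lcnorm} to both factors on the right, one has $\gnorm{e^{\omega(H/\mu)}\lvector{C}{\overline{\tilde G}}}{} \leq \sqrt{(m-k)!}$ and $\gnorm{\rvector{C}{H}}{}\leq \sqrt{(n-k)!}$. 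Finally, by monotonicity of $\onorm{A}{k}$ in $k$ (which is immediate from $P^{\mathrm f}_k = P^{\mathrm f}_k P^{\mathrm f}_{m+n} = P^{\mathrm f}_{m+n} P^{\mathrm f}_k$), one has $\onorm{A}{\max(m-k,n-k)} \leq \onorm{A}{m+n}$.

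Summing over $C \in \ccal_{m,n}$ (of which there are only finitely many) yields the desired bound with an explicit constant $c_{m,n} = 2\sum_{C\in \ccal_{m,n}}\sqrt{(m-|C|)!(n-|C|)!}$. The main subtlety to handle carefully is step 5, namely bookkeeping of the reparametrization when integrating against $\delta_C$ and verifying that $G,H$ indeed have $L^2$-norms bounded by those of $g,h$; once this is set up, the remainder is a clean application of Cauchy-Schwarz and the standard bound \eqref{omegaz} for creation operators.
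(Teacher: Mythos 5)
Your overall strategy coincides with the paper's: triangle inequality over the finitely many contractions, discard $S_C$ using $|S_C|=1$ via its factorization $S^\sigma(\thetav)S^\rho(\etav)$, integrate out $\delta_C$, and reduce each term to a matrix element of $A$ between vectors $\lvector{C}{\cdotarg},\rvector{C}{\cdotarg}$ controlled by \eqref{eq:lcnorm}, ending with a constant of the form $c_{mn}=\mathrm{const}\cdot\sum_{C}\sqrt{(m-|C|)!\,(n-|C|)!}$. The paper packages the $\delta_C$ step abstractly as the tensor-product estimate \eqref{eq:mnnormproduct} applied to $\delta_C\cdot\hscalar{\lvector{C}{\thetav}}{A\rvector{C}{\etav}}$, a product of distributions in independent variables; you unwind that estimate by hand, and this is where a genuine gap appears.

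The problem is the displayed intermediate bound $\int e^{-\omega(E(\hat\thetav))}\,|\Phi_C(\hat\thetav,\hat\etav)|\,G(\hat\thetav)H(\hat\etav)$. For general $A\in\qf^\omega$ the contracted matrix element $\Phi_C$ is only a distribution kernel (e.g.\ for $A=\idop$, or for $A=z^{\dagger m'}z^{n'}(f)$ with singular $f$, it contains delta functions), so its pointwise modulus $|\Phi_C|$ is undefined; and even when $\Phi_C$ is a function, the integral with $|\Phi_C|$ is \emph{not} equal to $\hscalar{\lvector{C}{\overline{\tilde G}}}{A\,\rvector{C}{H}}$ -- the modulus destroys exactly the matrix-element structure you need in the next step. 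You cannot both take absolute values inside (which your pointwise Cauchy--Schwarz in $\xiv$ and the replacement of $e^{-\omega(E(\thetav))}$ by $e^{-\omega(E(\hat\thetav))}$ require) and then identify the result as a matrix element of $A$. The repair reverses the order of operations: pair the distribution $\Phi_C$ with the single test function $K(\hat\thetav,\hat\etav):=\int d\xiv\, e^{-\omega(E(\thetav))}S_C\, g(\thetav)h(\etav)$, note that $K=\int d\xiv\, G_\xiv\otimes H_\xiv$ is a superposition of product functions with $\int d\xiv\, \gnorm{G_\xiv}{2}\gnorm{H_\xiv}{2}\leq\gnorm{g}{2}\gnorm{h}{2}$ (Cauchy--Schwarz in $\xiv$ at the level of norms, with the weight $e^{\omega(E(\hat\thetav))-\omega(E(\thetav))}\leq 1$ absorbed into $G_\xiv$), and then apply the operator-norm bound $\gnorm{e^{-\omega(E(\hat\thetav))}\Phi_C}{(m-|C|)\times(n-|C|)}\leq 2\sqrt{(m-|C|)!\,(n-|C|)!}\,\onorm{A}{m+n}$ term by term. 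This is precisely what \eqref{eq:mnnormproduct} accomplishes. A minor additional imprecision: your appeal to $e^{-\omega(E(\xiv))}\leq 1$ suggests $\omega(E(\hat\thetav)+E(\xiv))\geq\omega(E(\hat\thetav))+\omega(E(\xiv))$, which is the wrong direction for a sublinear $\omega$; only monotonicity, $e^{-\omega(E(\thetav))}\leq e^{-\omega(E(\hat\thetav))}$, is available, and it suffices.
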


\begin{proof}
The triangle inequality yields
\begin{equation}
\onorm{\cme{m,n}{A} }{m\times n} \leq \sum_{C\in \ccal_{m,n}} \onorm{\delta_C  S_C (\thetav,\etav)\,
\hscalar{ \lvector{C}{\thetav} }{ A \,\rvector{C}{\etav} } }{m \times n} .
\end{equation}
The factor $S_C (\thetav,\etav)$ factorizes to $S^\sigma(\thetav)S^\rho(\etav)$, see Lemma~\ref{lemma:scpermute}, and therefore can be estimated by 1, cf.~Eq.~\eqref{eq:mnnormfactor}. The individual factors of $\delta_C(\thetav,\etav)$ and the matrix element $\hscalar{ \lvector{C}{\thetav} }{ A \,\rvector{C}{\etav} } $ depend on mutually different variables; repeated application of \eqref{eq:mnnormproduct} then gives
\begin{equation} \label{eq:mntriangle}
\onorm{\cme{m,n}{A} }{m\times n} \leq \sum_{C\in \ccal_{m,n}} \Big(\prod_{j=1}^{|C|} \gnorm{\delta(\theta_{l_j}-\eta_{r_j})}{1 \times 1} \Big) \onorm{
\hscalar{ \lvector{C}{\thetav} }{ A \,\rvector{C}{\etav} } }{(m-|C|) \times (n-|C|)} .
\end{equation}
One easily sees $\gnorm{\delta(\theta-\eta)}{1 \times 1}=1$, and
\begin{equation}
 \onorm{\hscalar{ \lvector{C}{\thetav} }{ A \,\rvector{C}{\etav} } }{(m-|C|) \times (n-|C|)}
 \leq \sqrt{(m-|C|)!}\,\sqrt{(n-|C|)!} \;\onorm{A}{m+n}
\end{equation}
by Cauchy-Schwarz. Inserted into \eqref{eq:mntriangle}, this gives the result.
\end{proof}

As an important input to the following construction, we now show that the $\cme{m,n}{A}$ are always $S$-symmetric. We note that this is an improvement over \cite{Lechner:2008}: There, the $S$-symmetry was known only for $n=0$ and for $m=0$, or as a consequence of an analytic continuation due to locality of $A$.

\begin{proposition}\label{proposition:fmnsymm}
 Let $A\in\qf^\omega$. The $\cme{m,n}{A}$ are $S$-symmetric in the first $m$ and last $n$ variables separately; that is, for permutations
  $\pi \in \perms{m}$ and $\tau \in \perms{n}$,
\begin{equation}
   \cme{m,n}{A}(\thetav,\etav) = S^\pi(\thetav) S^\tau(\etav)\cme{m,n}{A}(\thetav^\pi,\etav^\tau).
\end{equation}
\end{proposition}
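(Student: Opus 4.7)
The plan is to reduce the claim to adjacent transpositions and then verify it term-by-term via a natural involution on the set of contractions. Since $D_m$ is a group representation by \eqref{eq:Scompose} and $\perms{m}$ is generated by adjacent transpositions, it suffices to prove $\cme{m,n}{A}(\thetav,\etav)=S^{\pi}(\thetav)\cme{m,n}{A}(\thetav^{\pi},\etav)$ for $\pi=(i,i+1)\in\perms{m}$; the analogous statement in $\etav$ follows either by the same method or via a conjugate symmetry relating $\cme{m,n}{A}$ to $\cme{n,m}{A^{\ast}}$. Abbreviating $T_C(\thetav,\etav):=(-1)^{|C|}\delta_C S_C\hscalar{\lvector{C}{\thetav}}{A\rvector{C}{\etav}}$, so that $\cme{m,n}{A}=\sum_{C}T_C$, I would define an involution $\Phi:\ccal_{m,n}\to\ccal_{m,n}$ that exchanges the indices $i$ and $i+1$ among the $l$-indices of $C$: relabel the single occurrence when only one of $i,i+1$ is present, swap the two contraction partners when both are present, and act trivially when neither is present. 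Since $|\Phi(C)|=|C|$, reindexing the sum via $\Phi$ reduces the claim to the termwise identity
$$ T_C(\thetav,\etav) = S^{\pi}(\thetav)\, T_{\Phi(C)}(\thetav^{\pi},\etav) \quad\text{for every } C\in\ccal_{m,n}. $$

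A direct case analysis, partitioning $\ccal_{m,n}$ according to which of $i,i+1$ appear as $l$-indices of $C$, shows that $\delta_C(\thetav,\etav)=\delta_{\Phi(C)}(\thetav^{\pi},\etav)$ and $\rvector{C}{\etav}=\rvector{\Phi(C)}{\etav}$ (since $\Phi$ does not touch the $r$-indices). In the sub-cases where at least one of $i,i+1$ is contracted, $\lvector{\Phi(C)}{\thetav^{\pi}}$ coincides literally with $\lvector{C}{\thetav}$, because the $\zd$-factors omitted by $C$ and $\Phi(C)$ respectively combine with the swap $\theta_i\leftrightarrow\theta_{i+1}$ to produce the same ordered product. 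In the sub-case where neither is contracted, the Zamolodchikov--Faddeev relation $\zd(\theta_i)\zd(\theta_{i+1})=S(\theta_i-\theta_{i+1})\zd(\theta_{i+1})\zd(\theta_i)$ gives $\lvector{C}{\thetav^{\pi}}=S^{\pi}(\thetav)\lvector{C}{\thetav}$, which after complex conjugation on the bra supplies exactly the factor $S^{\pi}(\thetav)$ demanded by the termwise identity.

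What remains is to match $S_C(\thetav,\etav)$ with $S^{\pi}(\thetav)S_{\Phi(C)}(\thetav^{\pi},\etav)$ (or, in the ``neither'' sub-case, with $S_C(\thetav^{\pi},\etav)$) on the support of $\delta_C$. Using Lemma~\ref{lemma:scpermute} to write $\delta_C S_C=\delta_C S^{\sigma}(\thetav)S^{\rho}(\etav)$ and analogously $\delta_{\Phi(C)} S_{\Phi(C)}=\delta_{\Phi(C)} S^{\sigma'}(\thetav)S^{\rho}(\etav)$ (the $\rho$-factor is unchanged as $\Phi$ fixes the $r$-indices), and applying the composition law \eqref{eq:Scompose} to rewrite $S^{\pi}(\thetav)S^{\sigma'}(\thetav^{\pi})=S^{\pi\circ\sigma'}(\thetav)$, the matching reduces to the purely combinatorial identity $S^{\sigma}=S^{\pi\circ\sigma'}$ as distributions restricted to $\delta_C$. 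This is checked by comparing the inversions of the explicit permutations \eqref{eq:scpermutations}. The main obstacle is the sub-case where exactly one of $i,i+1$ is contracted: there no ZF factor is available inside $\lvector{\cdot}{\cdot}$, so the entire factor $S^{\pi}(\thetav)$ must arise from the combinatorial mismatch between $\sigma$ and $\sigma'$, and the $\delta$-constraint $\theta_{l_j}=\eta_{r_j-m}$ for the single contracted index must be invoked to rewrite the surviving $S$-factors into the required form. Once the identity is verified in each of the three sub-cases, summing over $C$ yields the proposition.
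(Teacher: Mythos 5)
Your proposal is correct and takes essentially the same route as the paper's proof: reduction to an adjacent transposition, a termwise matching of the sum over contractions via the involution $C\mapsto C'$ that swaps the roles of $i$ and $i+1$ among the left indices, and the same case analysis (neither / both / exactly one of $i,i+1$ contracted) using Lemma~\ref{lemma:scpermute}, the composition law \eqref{eq:Scompose}, and the Zamolodchikov exchange relation in the uncontracted case.
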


\begin{proof}
 We consider the case $\tau = \operatorname{id}$ only; the arguments made for $\pi$ then apply to $\tau$ analogously. Also, it suffices to consider a transposition $\pi=(k\;\,k\!+\!1)$ due to the representation property of $S^\pi$, see Sec.~\ref{sec:scatter}.
We rewrite \eqref{eq:fmndef} as
\begin{equation}\label{eq:cmesum}
\cme{m,n}{A}(\thetav,\etav )=\sum_{C\in \ccal_{m,n}} T_{C}(\thetav ,\etav ), \quad \text{where}\;
 T_{C}(\thetav ,\etav ):= (-1)^{|C|} \delta_C \, S_C \hscalar{ \lvector{C}{\thetav} }{ A \,\rvector{C}{\etav} }.
\end{equation}
We want to compute
\begin{equation}\label{Tpi}
 T_{C}(\thetav^{\pi},\etav)= (-1)^{|C|} \delta_C(\thetav^{\pi},\etav) \, S^{\sigma}(\thetav^{\pi}) \,
 S^{\rho}(\etav) \, \hscalar{ \lvector{C}{\thetav^\pi} }{ A \,\rvector{C}{\etav} },
\end{equation}
where $\thetav ^{\pi}=(\theta_{1},\ldots,\theta_{k+1},\theta_{k},\ldots,\theta_{m})$, and where $\sigma,\rho$ are permutations corresponding to $C$ by Lemma~\ref{lemma:scpermute}.
Let us distinguish the four cases where the indices $k$ and $k+1$ are each either contracted or non-contracted in $C$.

\begin{enumerate}[(a)]

 \item Neither $k$ nor $k+1$ are contracted. Since $\delta_{C}$ depends only on the contracted variables, we have  $\delta_{C}(\thetav^{\pi},\etav)=\delta_{C}(\thetav,\etav)$. Further, $\hscalar{ \lvector{C}{\thetav^\pi} }{ A \,\rvector{C}{\etav} } = S(\theta_{k}-\theta_{k+1}) \hscalar{ \lvector{C}{\thetav} }{ A \,\rvector{C}{\etav} }$ due to the exchange relations of the Zamolodchikov algebra.
 As for the factor $S^\sigma(\thetav^\pi)$:
 We have for all $1\leq j\leq |C|$ that either $k,k+1> l_{j}$ or $k,k+1< l_{j}$. It then follows from \eqref{eq:scpermutations}, \eqref{eq:Sperm} that $S^{\sigma}(\thetav^{\pi})=S^{\sigma}(\thetav)$.
 In total, we obtain
\begin{equation}\label{eq:tcpi-a}
S^\pi(\thetav) T_{C}(\thetav^{\pi},\etav)= T_{C}(\thetav,\etav).
\end{equation}

\item Both $k$ and $k+1$ are contracted. With $C=(m,n,\{(l_{1},r_{1}),\ldots,(k,r),(k+1,r'),\ldots,(l_{|C|},r_{|C|})\})$, let $C':=(m,n,\{(l_{1},r_{1}),\ldots,(k,r'),(k+1,r),\ldots,(l_{|C|},r_{|C|})\})$. Then $\delta_C(\thetav^\pi,\etav) = \delta_{C'}(\thetav,\etav)$. Also, $\lvector{C}{\thetav^\pi}=\lvector{C}{\thetav}=\lvector{C'}{\thetav}$, since $\lvector{C}{\cdotarg}$ does not depend on the contracted variables. Regarding the $S$-factors, we write using Eq.~\eqref{eq:Scompose},
    \begin{equation} \label{eq:spifactor}
    S^{\sigma}(\thetav^{\pi}) = S^{\pi\sigma}(\thetav) S^{\pi}(\thetav)^{-1}.
    \end{equation}
    One finds that $\pi \circ \sigma$ corresponds to $C'$ above in the sense of Eq.~\eqref{Scontrperm}, with the same $\rho$ for both $C$ and $C'$. Combining all this into \eqref{Tpi}, we obtain
    \begin{equation}\label{eq:tcpi-b}
    S^\pi(\thetav) T_{C}(\thetav^{\pi},\etav)=T_{C'}(\thetav,\etav).
    \end{equation}
    Note that the contraction $C'$ is again of type (b).

    \item $k$ is contracted, but $k+1$ is not.
    With $C=( m,n,\{(l_{1},r_{1}) \ldots (k,r) \ldots (l_{|C|},r_{|C|})\})$, let $C':=(m,n,\{(l_{1},r_{1})\ldots (k+1,r)\ldots (l_{|C|},r_{|C|}) \})$. We have $\delta_C(\thetav^\pi,\etav) = \delta_{C'}(\thetav,\etav)$ and $\lvector{C}{\thetav^\pi}=\lvector{C'}{\thetav}$. Moreover, as in \eqref{eq:spifactor}, $S^{\sigma}(\thetav^{\pi}) = S^{\pi \sigma}(\thetav) S^\pi(\thetav)^{-1}  $ where the permutation $\pi\circ \sigma$ corresponds to $C'$. Combining all in \eqref{Tpi}, we arrive at
    \begin{equation}\label{eq:tcpi-c}
    S^\pi(\thetav) T_{C}(\thetav^{\pi},\etav)= T_{C'}(\thetav,\etav).
    \end{equation}
  \item  $k+1$ is contracted, but $k$ is not. This case is fully analogous to (c); in fact, the contraction $C'$ in (c) is precisely of type (d).

\end{enumerate}

Summing over all contractions $C$ in \eqref{eq:cmesum}, we obtain from \eqref{eq:tcpi-a}, \eqref{eq:tcpi-b}, \eqref{eq:tcpi-c} that\\ $S^\pi(\thetav)\cme{m,n}{A}(\thetav^\pi,\etav ) =  \cme{m,n}{A}(\thetav,\etav )$ as claimed.
\end{proof}

In Eq.~\eqref{eq:fmndef}, we defined $\cme{m,n}{A}$ as a certain sum over matrix elements of $A$. We will now show that this formula can be inverted in the following sense.

\begin{proposition}\label{proposition:fmninversion}
For any $A \in \qf^{\omega}$,
\begin{equation}\label{eq:fmninverse}
\hscalar{ \lvector{}{\thetav} }{ A\, \rvector{}{\etav} }
= \sum_{C\in \ccal_{m,n}} \delta_C \,S_C(\thetav,\etav)  \, f_{m-|C|,n-|C|}^{[A]}( \hat \thetav, \hat \etav).
\end{equation}
\end{proposition}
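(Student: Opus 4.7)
The plan is to prove the claim by direct substitution: insert the definition of $f_{m-|C|,n-|C|}^{[A]}$ from \eqref{eq:fmndef} into the right-hand side of \eqref{eq:fmninverse}, reorganize the resulting double sum by grouping terms according to the composed contraction, and then observe that a binomial cancellation leaves only the empty contraction.

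Concretely, substituting \eqref{eq:fmndef} into the RHS of \eqref{eq:fmninverse} yields
\begin{equation*}
\sum_{C\in \ccal_{m,n}} \sum_{C' \in \ccal_{m-|C|,n-|C|}} (-1)^{|C'|} \delta_C S_C(\thetav,\etav) \, \delta_{C'} S_{C'}(\hat\thetav,\hat\etav) \, \hscalar{\lvector{C'}{\hat\thetav}}{A\rvector{C'}{\hat\etav}}.
\end{equation*}
By Lemma~\ref{lemma:contractcompose}, the factor $\delta_C \delta_{C'} S_C S_{C'}$ equals $\delta_{C\dot\cup C'} S_{C \dot\cup C'}(\thetav,\etav)$. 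Moreover, since $\hat\thetav$ is obtained from $\thetav$ by deleting the $C$-contracted entries and $\lvector{C'}{\hat\thetav}$ further deletes the $C'$-contracted entries, we have $\lvector{C'}{\hat\thetav} = \lvector{C\dot\cup C'}{\thetav}$, and analogously for the right vector. Thus the summand depends on $(C,C')$ only through the composed contraction $D := C\dot\cup C' \in \ccal_{m,n}$ and the length $|C'|$.

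I would then reindex the double sum by $D$: for each $D\in \ccal_{m,n}$ and each integer $k$ with $0\le k \le |D|$, the pairs $(C,C')$ with $C\dot\cup C' = D$ and $|C'|=k$ are in bijection with the $k$-element subsets of the $|D|$ pairs of $D$ (designate which pairs go into $C'$, the rest into $C$). Hence the coefficient in front of $\delta_D S_D \hscalar{\lvector{D}{\thetav}}{A\rvector{D}{\etav}}$ collapses to the binomial sum
\begin{equation*}
\sum_{k=0}^{|D|} \binom{|D|}{k} (-1)^k = (1-1)^{|D|} = \delta_{|D|,0}.
\end{equation*}
Only $D=\emptyset$ contributes, yielding $\hscalar{\lvector{}{\thetav}}{A\rvector{}{\etav}}$, as desired.

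The bookkeeping step to watch is the reindexing $(C,C')\mapsto D$: one must verify that the renumbering implicit in the definition of $C\dot\cup C'$ makes the above bijection with subsets of pairs well-defined and that $\delta_{C'}(\hat\thetav,\hat\etav)$ and $S_{C'}(\hat\thetav,\hat\etav)$ really do glue, together with $\delta_C S_C$, to the quantities $\delta_D S_D$ attached to the composed contraction — both of which are already packaged in Lemma~\ref{lemma:contractcompose}. Everything else is standard combinatorics and an application of the distributional identities already established.
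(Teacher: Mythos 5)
Your proposal is correct and follows essentially the same route as the paper's proof: substitute the definition \eqref{eq:fmndef}, combine the $\delta$- and $S$-factors via Lemma~\ref{lemma:contractcompose}, reorganize the double sum over $(C,C')$ by the composed contraction $D=C\dot\cup C'$, and evaluate the inner sum as $\sum_{j}(-1)^j\binom{|D|}{j}=\delta_{|D|,0}$. Your explicit remark that $\lvector{C'}{\hat\thetav}=\lvector{C\dot\cup C'}{\thetav}$ is a point the paper leaves implicit, but it is the same argument.
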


\noindent (Here $\hat\thetav,\hat\etav$ denotes the variables obtained from $\thetav,\etav$ by dropping those components which are contracted in $C$.)

\begin{proof}
Inserting \eqref{eq:fmndef} into the right-hand side of \eqref{eq:fmninverse}, we need to show that
\begin{equation}\label{insert}
\hscalar{ \lvector{}{\thetav} }{A  \rvector{}{\etav} }
 =\sum_{C\in\ccal_{m,n}}\delta_{C}S_C(\thetav,\etav )
 \;\;\sum_{\mathclap{C'\in\ccal_{m-|C|,n-|C|}}}\;\; (-1)^{|C'|}\delta_{C'}S_{C'}(\hat{\thetav },\hat\etav)
 \hscalar{ \lvector{C \dot\cup C'}{\thetav} }{A \rvector{C \dot\cup C'}{\etav} }.
\end{equation}
Using Lemma~\ref{lemma:contractcompose}, we find
\begin{equation}
\rhs{insert} = \;\; \sum_{\mathclap{\substack{C\in\ccal_{m,n} \\ C'\in\ccal_{m-|C|,n-|C|}}}} \;\;
(-1)^{|C'|}\delta_{C\dot{\cup}C'}S_{C\dot{\cup}C'}(\thetav,\etav )\; \hscalar{ \lvector{C \dot\cup C'}{\thetav} }{A \rvector{C \dot\cup C'}{\etav} }.
\end{equation}
Denoting $D:=C\dot{\cup}C'$, we can reorganize the sum over $C,C'$ as follows:
\begin{equation}\label{eq:dsum}
\rhs{insert}=\sum_{D\in\ccal_{m,n}}\Big(\sum_{D=C\dot{\cup}C'}(-1)^{|C'|}\Big)\delta_{D}S_{D}
\hscalar{ \lvector{D}{\thetav} }{A \rvector{D}{\etav} }.
\end{equation}
After computing the inner sum in the above formula (at fixed $D$),
\begin{equation}
\sum_{D=C\dot{\cup}C'}(-1)^{|C'|}=\sum_{j=0}^{|D|}(-1)^{j}\binom{|D|}{j}=\begin{cases} 0 \quad&\mathrm{if}\quad |D|\geq 1,\\ 1 &\mathrm{if}\quad |D|=0,\end{cases}
\end{equation}
we find that the right hand side of \eqref{eq:dsum} gives $\hscalar{ \lvector{}{\thetav} }{ A\, \rvector{}{\etav} }$ as claimed.
\end{proof}

\subsection{Existence and uniqueness of the expansion}

After these preparations, we can now go on to establish the expansion \eqref{eq:expansionz}. We first note that the individual expansion terms of the form $z^{\dagger m}z^n(g)$ are well-defined elements of $\qf^\omega$ if $\onorm{g}{m \times n}<\infty$; see Prop.~\ref{proposition:zzdcrossnorm}. In particular, we can compute their fully contracted matrix elements $\cme{m,n}{\cdotarg}$. 
We will now show that the $z^{\dagger m}z^n(g)$ form a ``dual basis'' to the $\cme{m,n}{\cdotarg}$ in the following sense.

\begin{lemma}\label{lemma:fmnbasis}
In the sense of distributions, there holds:
\begin{equation}\label{basisprop}
\cmelong{m,n}{z^{\dagger m'}(\thetav')z^{n'}(\etav')} (\thetav,\etav)
= m!n! \delta_{m,m'}\delta_{n,n'}
\operatorname{Sym}_{S,\thetav} \delta^m(\thetav-\thetav')
\operatorname{Sym}_{S,\etav}\delta^n(\etav-\etav').
\end{equation}

\end{lemma}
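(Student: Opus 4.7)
The strategy is to apply Proposition~\ref{proposition:fmninversion} in reverse. First, I observe that the inversion formula uniquely determines the coefficients $f^{[A]}_{M,N}$ from the matrix elements $\hscalar{\lvector{}{\thetav}}{A\rvector{}{\etav}}$: by induction on $M+N$, the empty-contraction term isolates $g_{M,N}$ once all lower-order contributions are known, so any two sequences $(g_{M,N})$ and $(g'_{M,N})$ satisfying the inversion identity for every $M,N$ must coincide. Hence, to prove the lemma, it is enough to verify that the proposed candidate
\[ g_{M,N}(\xiv,\zetav) := M!\,N!\,\delta_{M,m'}\delta_{N,n'}\operatorname{Sym}_{S,\xiv}\delta^{M}(\xiv-\thetav')\operatorname{Sym}_{S,\zetav}\delta^{N}(\zetav-\etav') \]
reproduces the matrix elements of $A = z^{\dagger m'}(\thetav')z^{n'}(\etav')$ through the inversion identity.

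The remaining task is then to compute the matrix element $\hscalar{\lvector{}{\thetav}}{A\rvector{}{\etav}} = \hscalar{\Omega}{z(\theta_m)\cdots z(\theta_1)\zd(\theta'_1)\cdots\zd(\theta'_{m'})z(\eta'_1)\cdots z(\eta'_{n'})\zd(\eta_n)\cdots\zd(\eta_1)\Omega}$ directly via the Zamolodchikov--Faddeev relations. Iterated use of $z(\theta)\zd(\theta'') = S(\theta''-\theta)\zd(\theta'')z(\theta) + \delta(\theta''-\theta)\idop$ normal-orders this product, and only the fully-contracted ``Wick pairings'' between annihilators and creators to their right contribute to the vacuum expectation value. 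Because each $z(\eta'_i)$ appears to the right of all $\zd(\theta'_j)$'s, it can only pair with a right $\zd(\eta_k)$, whereas each $z(\theta_i)$ may pair with either a middle $\zd(\theta'_j)$ or a right $\zd(\eta_k)$. The equality of the total numbers of $z$'s and $\zd$'s forces $m-m' = n-n' =: K \geq 0$; if this constraint fails, both the matrix element and the candidate $g$ vanish.

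Under this constraint, the Wick pairings are in canonical bijection with triples $(C,\sigma,\tau)$: a contraction $C \in \ccal_{m,n}$ of length $K$ (recording which $\theta_i$'s pair with which right $\eta_k$'s), a permutation $\sigma \in \perms{m'}$ (the bijection between the non-contracted $\theta_i$'s and the $\theta'_j$'s), and a permutation $\tau \in \perms{n'}$ (the bijection between the non-contracted $\eta_k$'s and the $\eta'_i$'s). The product of delta functions produced is precisely $\delta_C(\thetav,\etav)\prod_{i=1}^{m'}\delta(\hat\theta_{\sigma(i)}-\theta'_i)\prod_{j=1}^{n'}\delta(\hat\eta_{\tau(j)}-\eta'_j)$, and summing over $(C,\sigma,\tau)$, once the $S$-factors are accounted for, reproduces
\[ \sum_{|C|=K}\delta_C S_C(\thetav,\etav)\cdot m'!\,n'!\,\operatorname{Sym}_{S,\hat\thetav}\delta^{m'}(\hat\thetav-\thetav')\operatorname{Sym}_{S,\hat\etav}\delta^{n'}(\hat\etav-\etav'), \]
which is exactly the RHS of the inversion formula applied to the candidate $g$. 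By the uniqueness observation, this yields $\cme{m,n}{A} = g_{m,n}$ as claimed.

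The main obstacle is the $S$-factor bookkeeping: each elementary commutation contributes an $S(\cdot)$ factor, and one must verify that the accumulated product for the pairing labeled by $(C,\sigma,\tau)$ factorizes as $S_C(\thetav,\etav)\,S^\sigma(\hat\thetav)\,S^\tau(\hat\etav)$. This is a purely combinatorial argument in the spirit of Lemma~\ref{lemma:scpermute}, proceeding by a case analysis on whether each $z(\theta_i)$ pairs with a middle creator or with a right creator, parallel to cases (a)--(d) of Proposition~\ref{proposition:fmnsymm}.
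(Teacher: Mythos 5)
Your argument is correct and follows essentially the same route as the paper: both pivot on Proposition~\ref{proposition:fmninversion} and reduce the lemma to the direct Zamolodchikov--Faddeev computation of $\hscalar{\lvector{}{\thetav}}{z^{\dagger m'}(\thetav')z^{n'}(\etav')\rvector{}{\etav}}$, with the same $\delta_C\, S_C\, S^\sigma(\hat\thetav)\, S^\tau(\hat\etav)$ bookkeeping that both treatments leave at the level of a sketched combinatorial verification. The only real difference is packaging: you replace the paper's induction on $k=m-m'$ (with the $k\leq 0$ cases computed directly from the definition \eqref{eq:fmndef}) by an explicit uniqueness statement for solutions of the inversion identity, which is the same M\"obius-inversion argument run once and for all.
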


\begin{proof}
Let us set $A:=z^{\dagger m'}(\thetav ')z^{n'}(\etav ')$. If $m-m'\neq n-n'$, then in \eqref{eq:fmndef} all matrix elements of $A$ vanish, hence $\cme{m,n}{A} =0$ and the claim follows. Therefore, in the following let $k:=m-m'=n-n'$.
If $k<0$, then by \eqref{eq:fmndef} we have $\cme{m,n}{A} =0$ and again the claim follows.
If $k=0$, then
\begin{multline}\label{eq:k0case}
\cme{m,n}{A}(\thetav,\etav)
=\hscalar{ z^{\dagger m}(\thetav )\Omega }{ z^{\dagger m}(\thetav') z^{n}(\etav') J z^{\dagger n}(\etav) \Omega}
\\
=\hscalar{ z^{\dagger m}(\thetav )\Omega }{ z^{\dagger m}(\thetav ')\Omega} \hscalar{ \Omega}{z^{n}(\etav ')J z^{\dagger n}(\etav )\Omega}
=m!n!\operatorname{Sym}_{S,\thetav} \delta^m(\thetav-\thetav')\operatorname{Sym}_{S,\etav}\delta^ n(\etav-\etav'),
\end{multline}
as claimed. For $k>0$, we use induction on $k$. From Prop.~\ref{proposition:fmninversion}, we have
\begin{multline}\label{eq:mtxk}
\hscalar{ \lvector{}{\thetav} }{ A \rvector{}{\etav}}
=\sum_{C\in \ccal_{m,n}}\delta_{C}S_Cf_{m-|C|,n-|C|}^{[A]}(\hat{\thetav },\hat{\etav })\\
=\cme{m,n}{A}(\thetav ,\etav )
  +\sum_{\substack{C\in \ccal_{m,n}  \\ |C|=k}}\delta_{C}S_C
   m'!n'!\operatorname{Sym}_{S,\hat{\thetav}} \delta^{m'}(\hat{\thetav }-\thetav ')\operatorname{Sym}_{S,\hat{\etav}} \delta^{ n'}(\hat{\etav }-\etav ').
\end{multline}
(Here we have used that by induction hypothesis, $\cme{m-|C|,n-|C|}{A}=0$ for $|C|\geq 1$, except for $|C|=k$, where \eqref{eq:k0case} applies.)
In view of \eqref{symstheta}, it therefore suffices to show that
\begin{equation}\label{matrixd}
\hscalar{ \lvector{}{\thetav} }{ A \rvector{}{\etav}}
=
m'!n'! \operatorname{Sym}_{S^{-1},\thetav'} \operatorname{Sym}_{S^{-1},\etav'}
 \sum_{\substack{C\in \ccal_{m,n}  \\ |C|=k}}\delta_{C}S_C \delta^{m'}(\hat{\thetav }-\thetav ') \delta^{n'}(\hat{\etav }-\etav ').
\end{equation}
We rewrite $S_C$ as $S^\sigma S^\rho$, where $\sigma$ and $\rho$ are the permutations of Eq.~\eqref{Scontrperm}, depending on $C$. Rewriting the arguments of the delta functions in terms of $\sigma$, $\rho$ as well, we obtain
\begin{multline}
\rhs{matrixd}
=m'!n'! \operatorname{Sym}_{S^{-1},\thetav'} \operatorname{Sym}_{S^{-1},\etav'}
\sum_{\substack{C\in \ccal_{m,n}  \\ |C|=k}}
 S^\sigma(\thetav)S^\rho(\etav)
\\
\times
\Big(\prod_{j=1}^{k}\delta(\theta_{\sigma(m-j+1)}-\eta_{\rho(j)}) \Big)
\Big(\prod_{j=1}^{m'}\delta(\theta_{\sigma(j)}-\theta_{j}') \Big)
\Big(\prod_{j=1}^{n'}\delta(\eta_{\rho(j+k)}-\eta_{j}') \Big).
\end{multline}
Since both $\hscalar{\lvector{}{\thetav}}{A\rvector{}{\etav}}$ and $\cme{m,n}{A}$ are $S$-symmetric in the variables $\thetav,\etav$, we know from \eqref{eq:mtxk} that the right hand side of \eqref{matrixd} must be $S$-symmetric too; we can therefore take the $S$-symmetric part of each term in the sum. Then, using the formula $\operatorname{Sym}_{S,\thetav}(S^{\sigma}(\thetav)g(\thetav^{\sigma}))=\operatorname{Sym}_{S,\thetav} g(\thetav)$ to simplify the expression, we find that the terms in the sum do not actually depend on $C$; the sum, which contains $\binom{n}{k}\binom{m}{k}k!$ terms, can then be computed. Due to \eqref{symstheta} the symmetrization in $\thetav',\etav'$ can be dropped as well in favor of the symmetrization in $\thetav,\etav$. After working out the numerical factors, one arrives at
\begin{multline}
\rhs{matrixd}
=\frac{m!n!}{k!}
\operatorname{Sym}_{S,\thetav}\operatorname{Sym}_{S,\etav}
\Big(\prod_{j=1}^{k}\delta(\theta_{m-j+1}-\eta_{j}) \Big)
\Big(\prod_{j=1}^{m'}\delta(\theta_{j}-\theta_{j}') \Big)
\Big(\prod_{j=1}^{n'}\delta(\eta_{j+k}-\eta_{j}') \Big).
\end{multline}
However, the right hand side is just $\hscalar{ \lvector{}{\thetav} }{ z^{\dagger m'}(\thetav ')z^{n'}(\etav ') \rvector{}{\etav}}$ -- this is seen by a straightforward computation using the Zamolodchikov relations. That shows \eqref{matrixd} and therefore concludes the proof.
\end{proof}

In the next step, we show that any sequence of distributions $g_{mn}$ with finite norms $\onorm{g_{mn}}{m \times n}$ \emph{defines} a quadratic form by means of the series \eqref{eq:expansionz}, and that the $g_{mn}$ can be recovered from this quadratic form by computing fully contracted matrix elements. This amounts to a uniqueness result for the series expansion. We stress that the series is actually a finite sum in the matrix elements that we consider, therefore convergence issues do not arise.

\begin{proposition}\label{proposition:expansionunique}
For any $m,n\in\nbb_0$, let $g_{mn}\in\dcal(\rbb^{m+n})'$ with $\gnorm{g_{mn}}{m \times n}^{\omega} < \infty$. Then,
\begin{equation}\label{arakig}
A := \sum_{m,n=0}^\infty \int \frac{d\thetav  \,d \etav}{m!n!} \, g_{mn}(\thetav,\etav) \,z^{\dagger m}(\thetav)z^n(\etav)
\end{equation}
defines an element of $\mathcal{Q}^{\omega}$, and $\cme{m,n}{A}(\thetav,\etav) = \operatorname{Sym}_{S,\thetav} \operatorname{Sym}_{S,\etav} g_{mn}(\thetav,\etav)$.
\end{proposition}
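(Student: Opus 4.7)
The proof proceeds in two stages: first, showing that \eqref{arakig} defines an element of $\qf^\omega$; second, computing $\cme{m,n}{A}$ via the dual-basis identity from Lemma~\ref{lemma:fmnbasis}.

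For the first stage, I exploit a particle-number truncation: $z^{\dagger m}z^n(g_{mn})$ maps $\hcal_j$ into $\hcal_{j+m-n}$ and annihilates $\hcal_j$ for $j<n$. Consequently, given any fixed $k$, the sandwich $\fpnp_k z^{\dagger m}z^n(g_{mn}) e^{-\omega(H/\mu)} \fpnp_k$ vanishes unless both $m,n \leq k$, and analogously on the other side. In particular, for $\chi,\psi \in \fpno$ (necessarily lying in $\fpnp_k \hcal$ for some $k$), the series \eqref{arakig} reduces to a finite sum when evaluated as a sesquilinear form, so that $A$ is a well-defined bilinear form on $\fpno \times \fpno$. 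The triangle inequality together with Proposition~\ref{proposition:zzdcrossnorm} then yields
\begin{equation}
\gnorm{A}{k}^\omega \leq \sum_{m,n\leq k} \frac{1}{m!n!}\gnorm{z^{\dagger m}z^n(g_{mn})}{k}^\omega \leq \sum_{m,n \leq k} \frac{2}{m!n!} \frac{k!}{(k-\max(m,n))!} \onorm{g_{mn}}{m \times n} < \infty,
\end{equation}
so $A \in \qf^\omega$.

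For the second stage, note that $B \mapsto \cme{m,n}{B}$ is linear in $B$, being a finite linear combination of matrix elements of $B$ according to \eqref{eq:fmndef}. The particle-number truncation from the first stage ensures that, for any fixed contraction $C \in \ccal_{m,n}$, only finitely many summands of \eqref{arakig} contribute to the distribution $\hscalar{\lvector{C}{\thetav}}{A \rvector{C}{\etav}}$. Sum and $\cme{m,n}{\cdotarg}$-construction may therefore be interchanged:
\begin{equation}
\cme{m,n}{A}(\thetav,\etav) = \sum_{m',n'} \frac{1}{m'!n'!}\int d\thetav' d\etav' \, g_{m',n'}(\thetav',\etav') \, \cmelong{m,n}{z^{\dagger m'}(\thetav')z^{n'}(\etav')}(\thetav,\etav).
\end{equation}
Inserting Lemma~\ref{lemma:fmnbasis} collapses the sum to the single term $(m',n')=(m,n)$; after cancelling the factorials and pulling $\operatorname{Sym}_{S,\thetav}$ and $\operatorname{Sym}_{S,\etav}$ out of the $\thetav',\etav'$ integration (they act only on the external variables), the delta functions collapse to deliver $\cme{m,n}{A}(\thetav,\etav) = \operatorname{Sym}_{S,\thetav}\operatorname{Sym}_{S,\etav} g_{m,n}(\thetav,\etav)$.

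The only technical care required is in justifying the interchange of infinite sums with matrix-element pairings and with the contraction construction. Every such interchange is legitimized by the strict $\pm 1$ particle-number-changing action of $z$ and $\zd$: pairings against any vector of bounded particle number involve only finitely many $(m,n)$, so all formally infinite sums reduce to honest finite sums on the relevant subspaces, and no genuine convergence question arises.
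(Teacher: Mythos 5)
Your proposal is correct and follows essentially the same route as the paper's proof: well-definedness via the observation that the series truncates to a finite sum on finite-particle-number vectors (with norms controlled by Proposition~\ref{proposition:zzdcrossnorm}), followed by an interchange of the sum with the $\cme{m,n}{\cdotarg}$ construction and an application of the dual-basis identity of Lemma~\ref{lemma:fmnbasis}. Your treatment merely spells out the particle-number bookkeeping and the bound on $\gnorm{A}{k}^\omega$ somewhat more explicitly than the paper does.
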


\begin{proof}
We only need to consider \eqref{arakig} evaluated between vectors of finite particle number, where the sum on the right hand side of \eqref{arakig} is finite. By Prop.~\ref{proposition:zzdcrossnorm}, every summand -- and hence the sum -- is a well-defined quadratic form in $\mathcal{Q}^{\omega}$. We now show the proposed formula for $\cme{m,n}{A}$.
\begin{multline}
\cme{m,n}{A}(\thetav ,\etav )=\sum_{m',n'= 0}^\infty \int \frac{d\thetav'd\etav'}{m'!n'!}g_{m'n'}(\thetav ',\etav ')
\cmelong{m,n}{z^{\dagger m'}(\thetav ')z^{n'}(\etav ')}(\thetav,\etav)\\
=\sum_{m',n'=0}^\infty \int \frac{d\thetav'd\etav'}{m'!n'!}g_{m'n'}(\thetav ',\etav ')m!n! \delta_{m,m'}\delta_{n,n'}
\operatorname{Sym}_{S,\thetav} \delta^{m}(\thetav-\thetav')
\operatorname{Sym}_{S,\etav} \delta^{n}(\etav-\etav')\\
=\operatorname{Sym}_{S,\thetav} \operatorname{Sym}_{S,\etav} g_{mn}(\thetav ,\etav ),
\end{multline}
where in the second equality we made use of Lemma~\ref{lemma:fmnbasis}.
\end{proof}

Finally, we are in the position to show that any $A \in \qf^{\omega}$ can be expanded into a series as in \eqref{eq:expansionz}, using the $\cme{m,n}{A}$ as expansion coefficients.
\begin{theorem}\label{theorem:arakiexpansion}
If $A \in \mathcal{Q}^{\omega}$, then in the sense of quadratic forms,
\begin{equation}\label{eq:arakiexp}
A = \sum_{m,n=0}^\infty \int \frac{d \thetav \, d \etav}{m!n!} \,\cme{m,n}{A} (\thetav,\etav)
z^{\dagger m}(\thetav)z^ n(\etav).
\end{equation}
\end{theorem}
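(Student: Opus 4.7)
The approach is to \emph{define} the right-hand side as a quadratic form $B$, and then show that $A$ and $B$ have the same fully contracted matrix elements, hence the same matrix elements on a total set of vectors, and are therefore equal as quadratic forms on $\fpno \times \fpno$.

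First, set
\[
  B := \sum_{m,n=0}^\infty \int \frac{d\thetav\,d\etav}{m!n!} \,\cme{m,n}{A}(\thetav,\etav)\, z^{\dagger m}(\thetav) z^n(\etav).
\]
By Proposition~\ref{proposition:fmnbound}, $\onorm{\cme{m,n}{A}}{m \times n} < \infty$ for every $m,n$, so Proposition~\ref{proposition:expansionunique} applies and yields $B \in \qf^\omega$. Moreover, when evaluated between any two vectors in $\fpno$ of definite (or at worst bounded) particle number, the sum reduces to a finite one: $z^{\dagger m} z^n(g)$ couples only particle-number subspaces differing by $m-n$ and is zero on $P_k \Hil$ for $k < n$. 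So no convergence question arises in the matrix elements of $B$.

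Next, I compute $\cme{m,n}{B}$ using Proposition~\ref{proposition:expansionunique} with $g_{mn} = \cme{m,n}{A}$:
\[
 \cme{m,n}{B}(\thetav,\etav) = \operatorname{Sym}_{S,\thetav}\operatorname{Sym}_{S,\etav}\cme{m,n}{A}(\thetav,\etav).
\]
But Proposition~\ref{proposition:fmnsymm} tells us $\cme{m,n}{A}$ is already $S$-symmetric in $\thetav$ and $\etav$ separately, so $\cme{m,n}{B} = \cme{m,n}{A}$. Since $A \mapsto \cme{m,n}{A}$ is manifestly linear in $A$ (see \eqref{eq:fmndef}), this gives $\cme{m,n}{A-B} = 0$ for all $m,n$.

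Finally, I apply the inversion formula of Proposition~\ref{proposition:fmninversion} to the quadratic form $A - B \in \qf^\omega$: since every $\cme{m,n}{A-B}$ vanishes, the distribution
\[
 \hscalar{\lvector{}{\thetav}}{(A-B)\rvector{}{\etav}}
\]
vanishes identically for every $m,n$. To conclude $A = B$ as quadratic forms on $\fpno\times\fpno$, I invoke density: by \eqref{eq:lcnorm}, vectors $\lvector{}{f}$ and $\rvector{}{g}$ for $f,g \in \dcal(\rbb^\bullet)$ lie in $\fpno$, and their linear span is dense in each $\Hil_n^\omega$ (since $\dcal(\rbb^n)$ is dense in the weighted $L^2$-space defined by $\onorm{\cdotarg}{2}$, using smoothness of $\omega$), and hence dense in $\fpno$ by the particle-number decomposition. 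The bound \eqref{eq:aomeganorm} controlling elements of $\qf^\omega$ is precisely adapted to the $\onorm{\cdotarg}{2}$-topology, so vanishing of the matrix elements on this dense subset forces $A - B = 0$. I expect this last density/continuity step to be the only subtle point of the argument; all the hard combinatorial work has already been absorbed into Propositions~\ref{proposition:fmninversion}, \ref{proposition:fmnsymm}, and \ref{proposition:expansionunique}.
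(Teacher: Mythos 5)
Your proposal is correct and follows essentially the same route as the paper's own proof: establish existence of the right-hand side via Propositions~\ref{proposition:fmnbound} and~\ref{proposition:expansionunique}, identify its contracted matrix elements with $\cme{m,n}{A}$ using $S$-symmetry from Proposition~\ref{proposition:fmnsymm}, and conclude equality via the inversion formula of Proposition~\ref{proposition:fmninversion}. The only difference is that you spell out the final density/continuity step, which the paper leaves implicit.
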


\begin{proof}
According to Prop.~\ref{proposition:fmnbound}, we have $\gnorm{\cme{m,n}{A}}{m\times n}^{\omega}<\infty$, thus Prop.~\ref{proposition:expansionunique} shows that the right-hand side of \eqref{eq:arakiexp} exists in $\qf^{\omega}$. To establish equality in \eqref{eq:arakiexp}, we need to show that both sides agree in all matrix elements. In view of Prop.~\ref{proposition:fmninversion}, it suffices to show that they agree in all $\cme{m,n}{\cdotarg}$. But this is the case by Prop.~\ref{proposition:expansionunique}, with $g_{mn}=\cme{m,n}{A}$. (We have used here that the $\cme{m,n}{A}$ are $S$-symmetric by Prop.~\ref{proposition:fmnsymm}.)
\end{proof}

\subsection{Behavior under symmetry transformations}\label{sec:fsymtrans}

We will now investigate how the expansion coefficients $\cme{m,n}{A}$ change when a symmetry transformation acts on $A$. In the case of translations and boosts, this is easy to describe.
\begin{proposition}\label{proposition:fmnpoincare}
For any $A \in \mathcal{Q}^{\omega}$, $x \in \rbb^2$, and $\lambda \in \rbb$,
\begin{equation}\label{eq:fmnpoincare}
\cmelong{m,n}{U(x,\lambda)AU(x,\lambda)^\ast}(\thetav,\etav) = e^{i(p(\thetav)-p(\etav))\cdot x}
 \cme{m,n}{A}(\thetav-\lambdav,\etav-\lambdav),
\end{equation}
where $\thetav-\lambdav = (\theta_1-\lambda, \ldots, \theta_m-\lambda)$, similarly for $\etav-\lambdav$.
\end{proposition}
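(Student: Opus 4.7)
The natural strategy is to apply the definition \eqref{eq:fmndef} to $U(x,\lambda)AU(x,\lambda)^\ast$ and to move the unitaries onto the contracted vectors $\lvector{C}{\thetav}$ and $\rvector{C}{\etav}$, keeping track of the phase factors and argument shifts that arise. The key input is the covariance of the formal kernels, $U(x,\lambda)^\ast\zd(\theta)U(x,\lambda) = e^{-ip(\theta)\cdot x}\zd(\theta-\lambda)$, which follows from \eqref{eq:translboostdef} by smearing $\zd$ against a test function and checking the action on one-particle states. Combined with $U(x,\lambda)^\ast\Omega=\Omega$, this propagates through the product of $m-|C|$ creators in $\lvector{C}{\thetav}$ (and analogously for $\rvector{C}{\etav}$) to give
\[
U(x,\lambda)^\ast \lvector{C}{\thetav} = e^{-ip(\hat\thetav)\cdot x}\,\lvector{C}{\thetav-\lambdav},
\qquad
U(x,\lambda)^\ast \rvector{C}{\etav} = e^{-ip(\hat\etav)\cdot x}\,\rvector{C}{\etav-\lambdav},
\]
where $\hat\thetav,\hat\etav$ are the non-contracted subtuples and $p(\hat\thetav)$ is shorthand for the sum of the $p(\theta_j)$ over those components. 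Conjugate-linearity of the inner product in its left slot then yields
\[
\hscalar{\lvector{C}{\thetav}}{U(x,\lambda)AU(x,\lambda)^\ast\rvector{C}{\etav}} = e^{i(p(\hat\thetav)-p(\hat\etav))\cdot x}\,\hscalar{\lvector{C}{\thetav-\lambdav}}{A\,\rvector{C}{\etav-\lambdav}}.
\]

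Next I would combine this with the kinematical factors in \eqref{eq:fmndef}. On the support of $\delta_C$ every contracted pair obeys $\theta_{l_j}=\eta_{r_j-m}$, so $p(\theta_{l_j})=p(\eta_{r_j-m})$ and hence $p(\hat\thetav)-p(\hat\etav) = p(\thetav)-p(\etav)$. This upgrades the ``hatted'' phase to $e^{i(p(\thetav)-p(\etav))\cdot x}$, which no longer depends on $C$ and can be pulled out of the sum over contractions. Moreover, both $\delta_C$ and $S_C$ are invariant under the common shift $(\thetav,\etav)\mapsto(\thetav-\lambdav,\etav-\lambdav)$: the products in \eqref{eq:sc} involve the arguments only through pairwise differences $\xi_a-\xi_b$, and the delta functions in \eqref{eq:deltac} only through pairwise matches $\theta_{l_j}-\eta_{r_j-m}$. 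Inserting these observations into \eqref{eq:fmndef} reorganises the sum as $e^{i(p(\thetav)-p(\etav))\cdot x}$ times $\cme{m,n}{A}(\thetav-\lambdav,\etav-\lambdav)$, which is exactly \eqref{eq:fmnpoincare}.

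The only mildly delicate point is the distributional identity $\delta_C(\thetav,\etav)\,e^{i(p(\thetav)-p(\etav))\cdot x} = \delta_C(\thetav,\etav)\,e^{i(p(\hat\thetav)-p(\hat\etav))\cdot x}$: this holds pointwise on the support of $\delta_C$ but must be interpreted as an equality of distributions when paired with test functions in the smeared matrix elements (as explained before \eqref{eq:deltac}). No convergence issues arise, since \eqref{eq:fmndef} is a finite sum for each fixed $m,n$. That $U(x,\lambda)AU(x,\lambda)^\ast$ actually lies in $\qf^\omega$ (or in $\qf^{\omega'}$ with $\omega'(p)=\omega(cp)$ in the case of a pure boost) is already recorded in the remark following \eqref{eq:aomeganorm}, so the expansion coefficients on the left-hand side of \eqref{eq:fmnpoincare} are well-defined from the outset.
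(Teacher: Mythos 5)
Your proposal is correct and follows essentially the same route as the paper's proof: apply the definition \eqref{eq:fmndef} to $U(x,\lambda)AU(x,\lambda)^\ast$, use the covariance $U(x,\lambda)^\ast\rvector{C}{\etav}=e^{-ip(\hat\etav)\cdot x}\rvector{C}{\etav-\lambdav}$ (and its analogue for $\lvector{C}{\thetav}$), then invoke the support of $\delta_C$ to promote the hatted phase to $e^{i(p(\thetav)-p(\etav))\cdot x}$ and the fact that $\delta_C$, $S_C$ depend only on rapidity differences. Your additional remarks on the distributional interpretation and on the boost possibly mapping $\qf^\omega$ into $\qf^{\omega'}$ are sound elaborations of points the paper leaves implicit.
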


\begin{proof}
From the definition \eqref{eq:fmndef}, we know that
\begin{equation}
\cmelong{m,n}{U(x,\lambda)AU(x,\lambda)^{*}}(\thetav ,\etav )=\sum_{C\in \ccal_{m,n}} (-1)^{|C|} \delta_C \, S_C \,
\hscalar{ \lvector{C}{\thetav}}{ U(x,\lambda)A U(x,\lambda)^{*} \rvector{C}{\etav} }.
\end{equation}
Here Eq.~\eqref{eq:translboostdef} yields $U(x,\lambda)^\ast\rvector{C}{\etav} = \exp(-i \sum_{k \not\in \{r_j\}} p(\eta_{k-m})\cdot x) \,\rvector{C}{\etav-\lambdav}$, similarly for $\lvector{C}{\thetav}$. In view of the support of $\delta_C$, and since the factors $\delta_C$, $S_C$ depend on differences of rapidities only, the result follows.
\end{proof}

The behavior of the coefficients under space-time reflections (antiunitarily represented by $J$) is more involved.
To describe it, we introduce for any contraction $C=(m,n,\{(\ell_j,r_j)\})$ the ``reflected`` contraction
$C^J=(n,m,\{(r_j-m,\ell_j+n)\})$, i.e., the one that is obtained from $C$ by swapping the roles of left and right indices.
We also introduce the following factor associated with $C$:
\begin{equation}\label{eq:rcdef}
 R_C(\thetav,\etav) := \prod_{j=1}^{|C|} \Big(1-\prod_{p_j=1}^{m+n} S^{(m)}_{l_j,p_j}(\thetav,\etav) \Big).
\end{equation}
This factor encodes, in a sense, the interaction of the model; note that in the free case $S=1$, one has $R_C = \delta_{|C|,0}$.
Its most important mathematical property is as follows.

\begin{lemma}\label{lemma:RSJ}
For any contraction $C$, we have
\begin{equation}\label{eq:RSJ}
\delta_{C^{J}}(\thetav,\etav)S_{C^{J}}(\thetav,\etav)R_{C^{J}}(\thetav,\etav)
=(-1)^{|C|} \delta_{C}(\etav,\thetav)S_{C}(\etav,\thetav)R_{C}(\etav,\thetav).
\end{equation}
\end{lemma}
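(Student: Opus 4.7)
The plan is to check the identity factor by factor; to shorten notation, let $(\ell'_j, r'_j) := (r_j-m, \ell_j+n)$ denote the pairs of $C^J$. The delta factors agree immediately: one has $\delta_{C^J}(\thetav,\etav) = \prod_j \delta(\theta_{r_j-m} - \eta_{\ell_j}) = \delta_C(\etav,\thetav)$ via $\delta(x)=\delta(-x)$. It thus suffices to verify the remaining identity for $S$- and $R$-factors on the joint support, where $\theta_{r_j-m} = \eta_{\ell_j}$ for each $j$.

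The sign $(-1)^{|C|}$ will be produced by the $R$-factors. Abbreviate $T_j(\thetav,\etav) := \prod_{p=1}^{m+n} S^{(m)}_{\ell_j, p}(\thetav,\etav)$, so that $R_C = \prod_j(1-T_j)$ by \eqref{eq:rcdef}. Substituting $\theta_{r_j-m} = \eta_{\ell_j}$ into the analogous single-particle product at $\ell'_j$ appearing in $R_{C^J}(\thetav,\etav)$ and using the unitarity $S(-\theta)=S(\theta)^{-1}$, a direct check gives the identification $T_j(\etav,\thetav)^{-1}$: the roles of left and right are swapped, which inverts every individual $S$-factor. Applying the elementary identity $1-x^{-1} = -x^{-1}(1-x)$ to each of the $|C|$ factors then yields, on the common support,
\begin{equation*}
  R_{C^J}(\thetav,\etav) \,=\, (-1)^{|C|} \prod_{j=1}^{|C|} T_j(\etav,\thetav)^{-1} \cdot R_C(\etav,\thetav).
\end{equation*}
This isolates the advertised sign and leaves one extra factor $\prod_j T_j(\etav,\thetav)^{-1}$ to be absorbed by the $S$-factor comparison.

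The last step is to show $S_{C^J}(\thetav,\etav) = \big(\prod_j T_j(\etav,\thetav)\big)\, S_C(\etav,\thetav)$ on $\supp\delta_C$. My approach is to invoke Lemma~\ref{lemma:scpermute} and write both sides as products $S^\sigma S^\rho$ for the permutations attached to the respective contractions. The permutations arising from $C^J$ differ from those arising from $C$ with swapped arguments by precisely the ``pass-through'' sequences that move each contracted index past every other particle; collapsing the trivial self-scattering factors $S(0)=1$ on the support then identifies the mismatch with $\prod_j T_j(\etav,\thetav)$. The main obstacle is this last combinatorial bookkeeping: the formula \eqref{eq:sc} mixes ``in-between'' ranges $[\ell_j+1, r_j-1]$ with ``crossing'' pairs $(\ell_j, r_i)$, and reflection $C \mapsto C^J$ redistributes them in a non-trivial way. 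Translating via Lemma~\ref{lemma:scpermute} and the composition law \eqref{eq:Scompose} reduces the verification to an identity of permutation $S$-products, which the unitarity of $S$ then finishes.
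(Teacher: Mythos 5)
Your overall strategy is exactly the paper's: the $\delta$-factors match trivially, the sign $(-1)^{|C|}$ comes from rewriting each factor of $R_{C^J}$ as $1-a^{-1}=-a^{-1}(1-a)$, and the leftover product $\prod_j T_j(\etav,\thetav)$ must be absorbed into the comparison of $S_{C^J}(\thetav,\etav)$ with $S_C(\etav,\thetav)$. Your $R$-factor computation is correct: on $\supp\delta_{C^J}$ one indeed has $\prod_p S^{(n)}_{r_j-m,\,p}(\thetav,\etav)=T_j(\etav,\thetav)^{-1}$, by unitarity of $S$ and the constraint $\theta_{r_j-m}=\eta_{\ell_j}$.

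The gap is that the remaining identity $S_{C^J}(\thetav,\etav)=\bigl(\prod_j T_j(\etav,\thetav)\bigr)S_C(\etav,\thetav)$ --- which you yourself flag as ``the main obstacle'' --- is only described, not proved. This is precisely the content of the paper's argument: one checks that the permutations $\sigma',\rho'$ attached to $C^J$ by Lemma~\ref{lemma:scpermute} factor as $\rho'=\sigma\circ\pi$ and $\sigma'=\rho\circ\tau$ for explicit ``pass-through'' permutations $\pi,\tau$, so that by \eqref{eq:Scompose} $S^{\sigma'}(\thetav)S^{\rho'}(\etav)=S^\rho(\thetav)S^\sigma(\etav)\,S^{\tau}(\thetav^\rho)S^{\pi}(\etav^\sigma)$, and then computes $S^{\pi}(\etav^{\sigma})S^{\tau}(\thetav^{\rho})=\prod_j\prod_{p}S^{(m)}_{\ell_j,p}(\etav,\thetav)$ on the support (Eq.~\eqref{eq:spitau}). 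Exhibiting $\pi$ and $\tau$ and reading off their inversion sets from \eqref{eq:Sperm} is the actual work of the lemma, and your proposal stops short of it. One further caution: you invoke ``$S(0)=1$'' to discard self-scattering factors, but the axioms only give $S(0)=\pm1$ (e.g.\ $S=-1$ for the Ising model). The argument survives because the factors $p=\ell_j$ and $p=r_j$ in $\prod_p S^{(m)}_{\ell_j,p}$ both degenerate to $S(0)$ on the support and so cancel in pairs, but as literally stated your cancellation step is not justified and would produce a spurious sign per contracted pair if the $S(0)$'s occurred singly.
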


\begin{proof}
Using Lemma~\ref{lemma:scpermute}, we rewrite \eqref{eq:RSJ} equivalently as
\begin{equation}\label{eq:RSJperm}
\delta_{C^{J}}(\thetav,\etav)S^{\sigma'}(\thetav) S^{\rho'}(\etav)R_{C^{J}}(\thetav,\etav)
=(-1)^{|C|}\delta_{C}(\etav,\thetav) S^\sigma(\etav) S^\rho(\thetav) R_{C}(\etav,\thetav),
\end{equation}
where $\sigma,\rho$ and $\sigma',\rho'$ correspond to $C$ and $C^J$, respectively. A short computation shows that we can choose $\rho' = \sigma \circ \pi$ with the permutation
\begin{equation}
\pi=\begin{pmatrix}1 && \ldots \hat{l} \ldots && m & l_1 & \ldots & l_{|C|}
\\ l_{|C|} & \ldots & l_1 & 1 && \ldots \hat{l} \ldots && m \end{pmatrix},
\end{equation}
and therefore $S^{\rho'}(\etav) = S^\sigma(\etav)S^{\pi}(\etav^\sigma) $ due to \eqref{eq:Scompose}. Correspondingly, one finds $S^{\sigma'}(\thetav) = S^\rho(\thetav)  S^{\tau}(\thetav^\rho) $ with an analogously defined permutation $\tau$. From the definition \eqref{eq:Sperm}, we explicitly compute on the support of $\delta_{C^J}$,
\begin{equation}\label{eq:spitau}
S^{\pi}(\etav^{\sigma}) S^{\tau}(\thetav^{\rho}) = \prod_{j=1}^{|C|}\prod_{p_{j}=1}^{m+n}S^{(m)}_{l_{j},p_{j}}(\etav,\thetav).
\end{equation}
From \eqref{eq:rcdef} and \eqref{eq:spitau}, one can now see that, again on the support of $\delta_{C^J}$,
\begin{equation}
  R_{C^J}(\thetav,\etav) S^{\pi}(\etav^{\sigma}) S^{\tau}(\thetav^{\rho}) = (-1)^{|C|} R_{C}(\etav,\thetav);
\end{equation}
this shows \eqref{eq:RSJperm}.
\end{proof}

We will now see how the factor $R_C$ describes the action of space-time reflections on the level of expansion coefficients.
\begin{proposition}\label{proposition:fmnreflected}
For any $A \in \qf^{\omega}$,
\begin{equation} \label{eq:fmnreflected}
\cme{m,n}{J A^\ast J}(\thetav,\etav) = \sum_{C \in \ccal_{m,n}}
(-1)^{|C|} \delta_C S_C
R_C(\thetav,\etav)
\cme{n-|C|,m-|C|}{A}(\hat \etav ,\hat\thetav).
\end{equation}
\end{proposition}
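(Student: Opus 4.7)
The plan is to start from the definition \eqref{eq:fmndef} of $\cme{m,n}{JA^\ast J}$, rewrite each matrix element $\hscalar{\lvector{C}{\thetav}}{JA^\ast J\rvector{C}{\etav}}$ as a matrix element of $A$ with swapped left/right arguments using the antiunitarity of $J$, re-expand via the inversion formula of Prop.~\ref{proposition:fmninversion}, and reorganize the resulting double sum over contractions into the single sum weighted by $R_D$.

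First I would establish the action of $J$ on the contracted vectors. A direct computation on $S$-symmetrized wave functions---using antiunitarity of $J$, the formula $(U(j)\psi)_n(\thetav) = \overline{\psi_n(\theta_n,\ldots,\theta_1)}$, $J\Omega=\Omega$, and $\overline{S(\alpha)} = S(-\alpha)$---yields $J\,\zd(\theta_1)\cdots\zd(\theta_m)\,\Omega = \zd(\theta_m)\cdots\zd(\theta_1)\,\Omega$. Inspecting the indices omitted on each side against the definition of $C^J$, this gives $J\lvector{C}{\thetav} = \rvector{C^J}{\thetav}$ and $J\rvector{C}{\etav} = \lvector{C^J}{\etav}$, so that
\begin{equation*}
\hscalar{\lvector{C}{\thetav}}{J A^\ast J\,\rvector{C}{\etav}}
 = \hscalar{J\rvector{C}{\etav}}{A\,J\lvector{C}{\thetav}}
 = \hscalar{\lvector{C^J}{\etav}}{A\rvector{C^J}{\thetav}}
 = \hscalar{\lvector{}{\hat\etav}}{A\rvector{}{\hat\thetav}},
\end{equation*}
where the last equality just reinterprets the vectors in terms of their non-contracted variables.

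Substituting this into \eqref{eq:fmndef} and applying Prop.~\ref{proposition:fmninversion} to the reduced matrix element (with $\thetav,\etav$ replaced by $\hat\etav,\hat\thetav$) yields a double sum over $C\in\ccal_{m,n}$ and $C'\in\ccal_{n-|C|,m-|C|}$. Every pair $(C,C')$ combines into a single total contraction $D = C\dot\cup C' \in \ccal_{m,n}$, and conversely each $D$ of length $k$ arises from $2^{k}$ such splits (one per subset of its pairs). Using $\delta_C\delta_{C'}=\delta_D$, expanding
\begin{equation*}
 R_D = \sum_{U\subseteq D}(-1)^{|U|}\prod_{j\in U}\prod_{p_j=1}^{m+n}S^{(m)}_{\ell_j,p_j},
\end{equation*}
identifying $U$ with $C'$, and applying the sign relation $(-1)^{|C|} = (-1)^{|D|}(-1)^{|C'|}$, the proof reduces to proving the $S$-factor identity
\begin{equation*}
 S_C(\thetav,\etav)\,S_{C'}(\hat\etav,\hat\thetav) = S_D(\thetav,\etav)\prod_{j\in C'}\prod_{p_j=1}^{m+n}S^{(m)}_{\ell_j,p_j}(\thetav,\etav)
\end{equation*}
on the support of $\delta_D$, for every split $D = C\dot\cup C'$.

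The main obstacle is this last identity. Because the definition \eqref{eq:sc} of $S_C$ contains cross-terms between distinct pairs, the ratio $S_C\,S_{C'}/S_D$ does not manifestly factorize over the pairs of $C'$; moreover $C'$ contracts $\hat\etav$ with $\hat\thetav$ whereas $C$ and $D$ contract $\thetav$ with $\etav$, so one must translate $C'$'s indices from the reduced swapped variables back to the original $(\thetav,\etav)$. The natural approach is induction on $|C'|$: each step moves one pair from $C$ into $C'$, effectively reflecting its left/right roles, and the resulting change in $S$-factors is exactly $\prod_{p_j}S^{(m)}_{\ell_j,p_j}$---precisely the content of the explicit computation \eqref{eq:spitau} in the proof of Lemma~\ref{lemma:RSJ}, combined with the permutation decomposition of Lemma~\ref{lemma:scpermute}. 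Once this bookkeeping is handled, the sign and combinatorial manipulations above deliver \eqref{eq:fmnreflected}.
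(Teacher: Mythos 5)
Your proposal is correct and follows essentially the same route as the paper: swap the contracted vectors with $J$ (so the matrix element becomes $\hscalar{\lvector{C^J}{\etav}}{A\,\rvector{C^J}{\thetav}}$), re-expand with Prop.~\ref{proposition:fmninversion}, regroup the double sum by the total contraction $D=C\dot\cup C^{\prime J}$, and recognize the inner sum over splits as the distributive-law expansion of $R_D$. The $S$-factor identity you isolate as the "main obstacle" is exactly the combination of Lemma~\ref{lemma:contractcompose} with Lemma~\ref{lemma:RSJ} (via Eq.~\eqref{eq:spitau}), which is how the paper disposes of it as well.
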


\begin{proof}
We first note that for any contraction $C\in\ccal_{m,n}$, we have $J \lvector{C}{\cdotarg} = \rvector{C^J}{\cdotarg}$ and $J \rvector{C}{\cdotarg} = \lvector{C^J}{\cdotarg}$. By replacing $A$ with $JA^{*}J$ in the definition of $\cme{m,n}{A}$, Eq.~\eqref{eq:fmndef}, we then obtain:
\begin{equation}
\cme{m,n}{JA^{*}J}(\thetav,\etav)=
\sum_{C\in\ccal_{m,n}}(-1)^{|C|}\delta_{C}S_{C}(\thetav,\etav)
\hscalar{ \lvector{C^J}{\etav} }{ A \, \rvector{C^J}{\thetav}}.
\end{equation}
Using Prop.~\ref{proposition:fmninversion} in the formula above, we find
\begin{equation}\label{maineq}
\cme{m,n}{JA^{*}J}(\thetav,\etav)=
\;\;\sum_{\mathclap{\substack{C\in\ccal_{m,n} \\ C'\in\ccal_{n-|C|,m-|C|} }}}\;\;
(-1)^{|C|}\delta_{C} S_{C}(\thetav,\etav ) \delta_{C'} S_{C'}(\hat{\etav},\hat{\thetav}) \cme{n-|C|-|C'|,m-|C|-|C'|}{A}(\Hat{\Hat{\etav}},\Hat{\Hat{\thetav}}),
\end{equation}
where in $\hat{\thetav},\hat\etav$ variables are ``dropped'' with respect to $C$, but in $\Hat{\Hat{\thetav}},\Hat{\Hat\etav}$ with respect to $C \dot\cup C^{\prime J}$.
We apply Lemma~\ref{lemma:contractcompose} (with $C^{\prime J}$ in place of $C'$), and reorganize the sum, setting $D:=C \dot\cup C^{\prime J}$. This yields
\begin{equation}\label{eq:dsumJ}
\cme{m,n}{JA^{*}J}(\thetav,\etav)=\sum_{D\in\ccal_{m,n}}(-1)^{|D|} \delta_{D} S_{D}(\thetav,\etav)
\underbrace{\Big(\sum_{D = C \dot\cup C^{\prime J}} (-1)^{|C'|} \frac{S_{C^{\prime}}(\hat\etav,\hat\thetav)}{S_{C^{\prime J}}(\hat\thetav,\hat\etav)} \Big)}_{(\ast)}
\cme{n-|D|,m-|D|}{A}(\hat{\hat{\etav}},\hat{\hat{\thetav}}).
\end{equation}
It remains to compute the inner sum $(\ast)$.
Using Lemma~\ref{lemma:RSJ}, we have on the support of $\delta_D$,
\begin{equation}\label{eq:dsumrewritten}
 (\ast) = \sum_{D = C \dot\cup C^{\prime J}} \frac{R_{C^{\prime J}}(\hat\thetav,\hat\etav)}{R_{C^{\prime}}(\hat\etav,\hat\thetav)}
  = \sum_{D = C \dot\cup C^{\prime J}}  \prod_{j \in \{r_i'-n\}} \frac{1-a_j}{1-a_j^{-1}},
\quad \text{where} \; a_j:=\prod_{p=1}^{m+n} S_{j,p}^{(m)}(\thetav,\etav).
\end{equation}
Using the distributive law, we find
\begin{equation}
 (\ast) =  \prod_{j \in \{\ell_i\} \cup \{r_i'-n\}} \Big(1 + \frac{1-a_j}{1-a_j^{-1}}\Big)
=  \prod_{j \in \{\ell_i\} \cup \{r_i'-n\}} (1-a_j)
 = R_D(\thetav,\etav).
\end{equation}
Inserting this result into \eqref{eq:dsumJ} concludes the proof.
\end{proof}

\section{Warped convolutions}\label{sec:warped}

We will now investigate whether the expansion coefficients $\cme{m,n}{A}$ can be expressed in a simpler way than in their definition \eqref{eq:fmndef}, and thus, are amenable to a more natural interpretation.

This is certainly possible in the case $S=1$, that is, in free field theory, where the Zamolodchikov operators $\zd$ and $z$ are the usual Bose annihilation and creation operators $\ad$ and $a$. As remarked in the introduction, we can in this case write the expansion coefficients as
\begin{equation}\label{eq:fmnfree}
 \cme{m,n}{A}(\thetav,\etav) = \bighscalar{\Omega}{ [a(\theta_m), \ldots [a(\theta_1),[\ldots[A,\ad(\eta_n)]\ldots ,\ad(\eta_1)] \ldots ] \;\Omega  }.
\end{equation}
This can be verified in a straightforward manner in the case $A=a^{\dagger m'} a^{n'}(f)$, using the rules of the CCR. It then holds for all quadratic forms $A$ by linearity, making use of the expansion established in Thm.~\ref{theorem:arakiexpansion}.

A formula of this kind can certainly be extended to some other situations as well. For example, in the Ising model ($S=-1$), where $\zd,z$ fulfill the CAR, we can define a graded commutator $[\cdotarg, \cdotarg]_g$, equaling the commutator between even operators and the anticommutator between odd operators (with respect to the adjoint action of $(-1)^N$). One then obtains
\begin{equation}\label{eq:fmnising}
 \cme{m,n}{A}(\thetav,\etav) = \bighscalar{\Omega}{ [z(\theta_m), \ldots [z(\theta_1),[\ldots[A,\zd(\eta_n)]_g\ldots ,\zd(\eta_1)]_g \ldots ]_g \;\Omega  },
\end{equation}
in full analogy with \eqref{eq:fmnfree}. Again, a proof would follow the idea of explicitly computing the commutators in the case $A=z^{\dagger m'} z^{n'}(f)$.

It is natural to ask whether this kind of nested commutator expression can be generalized to a larger class of models. Here we will focus on those obtained from the \emph{warped convolution} construction described by Buchholz, Summers and Lechner \cite{BuchholzSummers:2008,BuchholzSummersLechner:2011}. In this approach, one starts from a given quantum field theory and deforms the algebras of observables, thus constructing a new theory. This takes a skew symmetric matrix $Q$ as a deformation parameter; it is equivalent to a Rieffel deformation \cite{Rieffel:1993} with respect to the action of the translation group, and can alternatively be interpreted in terms of a quantum field theory on noncommutative space-time \cite{GrosseLechner:2007}. While the intent of \cite{BuchholzSummersLechner:2011} was to apply this deformation to a general, possibly interacting quantum field theory, in particular in 2+1 and more space-time dimensions, we return here to 1+1 dimensional free field theory as the starting point. The resulting deformed theory is then known to be equivalent to an integrable model with a certain simple type of scattering function $S$; cf.~\cite{GrosseLechner:2007}. We shall see this equivalence explicitly below.

Our aim is to define, for this particular class of models, a ``deformed commutator'' $[\cdotarg, \cdotarg]_Q$ which makes an analogue of the formula \eqref{eq:fmnfree} hold.

Let us first mention some technical preliminaries. In all what follows, $\hcal$ and related spaces will denote the objects associated with the free field, $S=1$. Also, we only consider the indicatrix $\omega=0$ and drop the superscript $\omega$ from all our objects.

With $\ccal^\infty$, we denote the subalgebra of $\boundedops$ consisting of norm-smooth operators with respect to the adjoint action of translations; we equip $\ccal^\infty$ with the usual Fr\'echet topology. Correspondingly, let $\qf^\infty \subset \qf$ be the space of quadratic forms $A$ that fulfill $\fpnp_k A \fpnp_k \in \ccal^\infty$ for all $k$. We also consider the following subspace $\fcal^\infty$ of $\qf^\infty$: For $A \in \fcal^\infty$ and any $k \in \nbb$, there exists $k' \in \nbb$ so that $\fpnp_{k'} A \fpnp_k=A \fpnp_k$ and  $\fpnp_{k} A \fpnp_{k'}=\fpnp_{k} A$. Then $\qf^\infty$ is a bimodule over $\fcal^\infty$. Note that $\fcal^\infty \subset \qf^\infty$, $\ccal^\infty\subset\qf^\infty$, but $\ccal^\infty \not \subset \fcal^\infty$.

Further, we consider $\fcal^\infty$-valued distributions on $\rbb^m$: By this we mean linear maps $\dcal(\rbb^m) \to \fcal^\infty$, $f \mapsto A(f)$, such that for any $k$, the number $k'$ above can be chosen independent of $f$, and such that the maps $f \mapsto A(f) \fpnp_k$ and $f \mapsto \fpnp_k  A(f)$ are continuous in the Fr\'echet topology of $\ccal^\infty$. Products of $\fcal^\infty$-valued distributions in independent variables are again $\fcal^\infty$-valued distributions. As before, we will usually write distributions in terms of their formal kernels, $A(f) = \int A(\thetav) f(\thetav) d\thetav$.

Since the (anti)unitary representation $U$ of the proper Poincar\'e group commutes with the particle number operator, the symmetry group acts, via the adjoint action of $U$, on $\ccal^\infty$, $\qf^\infty$, $\fcal^\infty$, and on $\fcal^\infty$-valued distributions. Here we are particularly interested in the action of translations operators $U(x):=U(x,0)$. We say that an $\fcal^\infty$-valued distribution $A$ is \emph{homogeneous} if there is a smooth function $\varphi_A:\rbb^m \to \rbb^2$ such that
\begin{equation}\label{homogeneous}
\forall x \in \rbb^2: \quad U(x)A(\thetav)U(x)\st=e^{i\varphi_A(\thetav)\cdot x}A(\thetav).
\end{equation}
We call $\varphi_A$ the \emph{momentum transfer} of $A$. If $A(\thetav),B(\etav)$ are both homogeneous, then so is $A(\thetav)B(\etav)$, with momentum transfer $\varphi_{AB}(\thetav,\etav)=\varphi_A(\thetav)+\varphi_B(\etav)$.
Important examples of homogeneous distributions are $\ad(\theta)$, $a(\eta)$, and $a^{\dagger m}a^{n}(\thetav,\etav)$, with momentum transfer $p(\theta)$, $-p(\eta)$, and $p(\thetav)-p(\etav)$, respectively, as well as their deformed versions to be considered below.

We now proceed to the warped convolution. We denote the (joint) spectral measure of the momentum operator as $dE(p)$. Further, we fix a $2\times 2$ matrix $Q$ which is skew symmetric with respect to the Minkowski scalar product, i.e., $xQy:=x\cdot (Qy) = -(Qx) \cdot y$. The warped convolution $\tau_Q(A)$ of an operator $A$ is formally defined by
\begin{equation}\label{eq:warpedintegral}
\tau_{Q}(A) := \int U(Qp) A U(Qp)\st\,dE(p)=\int dE(p) \,U(Qp) A U(Qp)\st.
\end{equation}
This integral must be taken with care, since the integrand has constant norm. However, Buchholz, Lechner and Summers \cite{BuchholzSummersLechner:2011} were able to define it for smooth operators $A$ in the sense of an oscillatory integral, yielding a bijective map $\tau_Q:\ccal^\infty \to \ccal^\infty$. We need to extend this map to our space of quadratic forms, and will use the projectors $\fpnp_k$ to that end.\footnote{%
An alternative approach would be to use a generalization of the deformation integral to locally convex spaces, as established in \cite{LechnerWaldmann:2011}.} 
Since the $\fpnp_k$ commute with $U(x)$, we can establish for $A \in \ccal^\infty$,
\begin{equation}\label{eq:qkcommute}
   \tau_Q(A \fpnp_k) = \tau_Q(A) \fpnp_k, \quad \tau_Q(\fpnp_k A ) = \fpnp_k \tau_Q(A).
\end{equation}
This allows us to  extend $\tau_Q$ to quadratic forms $A \in \qcal^\infty$: We set for $\psi,\chi\in\fpn$,
\begin{equation}\label{eq:qkcompat}
   \hscalar{\psi}{ \tau_Q(A) \chi} := \hscalar{\psi}{ \tau_Q(\fpnp_k A \fpnp_k) \chi}
\end{equation}
with $k$ chosen large enough for $\psi,\chi$ so that, by \eqref{eq:qkcommute}, the expression on the r.h.s.~becomes independent of $k$. The relations \eqref{eq:qkcommute} then hold for all $A \in \qcal^\infty$. We state the most important properties of the map $\tau_Q$.

\begin{proposition}\label{proposition:tauq}
  For any skew symmetric matrices $Q,Q'$, we have:

\begin{enumerate}
  \renewcommand{\theenumi}{(\roman{enumi})}
  \renewcommand{\labelenumi}{\theenumi}
  \item\label{it:warpedcontinuous}
    $\tau_Q: \ccal^\infty \to \ccal^\infty$ is continuous.
  \item\label{it:warpedcompose}
    $\tau_Q \tau_{Q'} = \tau_{Q+Q'}$, $\tau_0 = \id$, $\tau_Q^{-1} = \tau_{-Q}$.
  \item\label{it:warpedtranslate}
    $\tau_Q(U(x) A) = U(x) \tau_Q(A)$, $\tau_Q(A U(x)) = \tau_Q(A)U(x)$ for any $x \in \rbb^2$ and $A \in \qf^\infty$.
  \item\label{it:warpediso}
    $\tau_Q: \ccal^\infty \to \ccal^\infty$, $\tau_Q:\fcal^\infty\to\fcal^\infty$, $\tau_Q:\qcal^\infty \to \qcal^\infty$ are $\ast$-preserving vector space isomorphisms.
  \item\label{it:warpeddistribution}
    If $A$ is an $\fcal^\infty$-valued distribution, then $\tau_Q(A) : f \mapsto \tau_Q(A(f))$ is an $\fcal^\infty$-valued distribution as well.
    If $A$ is homogeneous, then so is $\tau_Q(A)$, with the same momentum transfer as $A$.
\end{enumerate}
\end{proposition}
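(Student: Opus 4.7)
The plan is to reduce each claim to the case of smooth bounded operators $A\in\ccal^\infty$, where parts (i), (ii), and the bijectivity statement of (iv) are already established in \cite{BuchholzSummersLechner:2011}, and then to transport them to $\qf^\infty$ by means of the compatibility relations \eqref{eq:qkcommute}--\eqref{eq:qkcompat}. Concretely, for (i) I would quote the continuity of $\tau_Q$ on $\ccal^\infty$ directly. For the composition law $\tau_Q\tau_{Q'}=\tau_{Q+Q'}$ and $\tau_0=\id$ in (ii), these are standard consequences of the oscillatory-integral calculus for the abelian translation group; both identities, and hence $\tau_Q^{-1}=\tau_{-Q}$, extend to $\qf^\infty$ because by \eqref{eq:qkcompat}--\eqref{eq:qkcommute} every matrix element between vectors of finite particle number is reduced to the action of $\tau_Q$ on $\fpnp_k A\fpnp_k\in\ccal^\infty$.

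For (iii), the crucial observation is that the translations $U(x)=U(x,0)$ commute with all the operators $U(Qp)$ (the translation group is abelian) and with the spectral measure $dE(p)$ of the momentum operator. Consequently $U(x)$ can be pulled through the oscillatory integral in \eqref{eq:warpedintegral}, giving the claim on $\ccal^\infty$. Since $U(x)$ also commutes with the projectors $\fpnp_k$, this lifts matrix-element-wise to $A\in\qf^\infty$ via the same projector decomposition used for (ii).

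For (iv), the $\ast$-preservation follows by taking the adjoint of the two equivalent representations in \eqref{eq:warpedintegral}, which interchange under $\ast$. The stability of each of the subspaces $\ccal^\infty$, $\fcal^\infty$, $\qf^\infty$ under $\tau_Q$ is then a consequence of \eqref{eq:qkcommute}: for example, if $A\in\fcal^\infty$ with $\fpnp_{k'}A\fpnp_k=A\fpnp_k$, one computes $\tau_Q(A)\fpnp_k=\tau_Q(A\fpnp_k)=\tau_Q(\fpnp_{k'}A\fpnp_k)=\fpnp_{k'}\tau_Q(A)\fpnp_k$, so $\tau_Q(A)\in\fcal^\infty$; the argument for $\qf^\infty$ is analogous, combined with the continuity of $\tau_Q$ on $\ccal^\infty$. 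The bijectivity on each space follows from (ii). Part (v) is then a direct consequence: applying (iv) pointwise gives $\tau_Q(A(\thetav))\in\fcal^\infty$, the required Fr\'echet continuity of $f\mapsto\tau_Q(A(f))\fpnp_k$ follows by composing (i) with the continuity hypothesis on $A$, and homogeneity with unchanged momentum transfer is just (iii) applied at each $\thetav$, using that the factor $e^{i\varphi_A(\thetav)\cdot x}$ is a scalar and thus passes through $\tau_Q$.

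The main obstacle is not a single deep step but rather the uniformity of the projector-based extension: one must verify that the integer $k'$ appearing in the definition of an $\fcal^\infty$-valued distribution can be taken independently of $f$ after applying $\tau_Q$, which is exactly what the relations \eqref{eq:qkcommute} guarantee. Once this bookkeeping is in place, the proposition essentially lifts the corresponding statements for bounded smooth operators from \cite{BuchholzSummersLechner:2011} to the larger classes of quadratic forms and $\fcal^\infty$-valued distributions needed in the sequel.
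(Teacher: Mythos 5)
Your proposal is correct and follows essentially the same route as the paper: quote the results of Buchholz--Lechner--Summers for smooth bounded operators (continuity, the composition law, $\ast$-preservation) and lift everything to $\qf^\infty$, $\fcal^\infty$ and $\fcal^\infty$-valued distributions via the projector compatibility relations \eqref{eq:qkcommute}--\eqref{eq:qkcompat}, with translation covariance read off from the oscillatory integral and homogeneity deduced from part (iii). The only cosmetic difference is that the paper spells out the continuity in (i) slightly more explicitly, as $\tau_Q=\pi_Q\circ\imath$ with $\imath$ continuous by Rieffel's Lemma 7.2 and $\pi_Q$ norm-preserving, rather than citing it as a black box.
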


\begin{proof}
Part~\ref{it:warpedcontinuous} follows similar to \cite[Prop.~2.7(ii)]{BuchholzSummersLechner:2011}: With notation used there, the inclusion $\imath$ of $\ccal^\infty$, equipped with the Fr\'echet topology induced by $\gnorm{\cdotarg}{}$, into itself equipped with the Fr\'echet topology induced by $\gnorm{\cdotarg}{Q}$, is continuous \cite[Lemma~7.2]{Rieffel:1993}. The map $\pi_Q$ in turn is norm-preserving between $\gnorm{\cdotarg}{Q}$ and $\gnorm{\cdotarg}{}$ and hence intertwines the associated Fr\'echet topologies as well. In our terms, $\tau_Q = \pi_Q \circ \imath$.---%
 For~\ref{it:warpedcompose}, the relation $\tau_Q \tau_{Q'} = \tau_{Q+Q'}$ was shown on $\ccal^\infty$ in \cite[Prop.~2.11]{BuchholzSummersLechner:2011}; it extends to $\qcal^\infty$ by using \eqref{eq:qkcompat}. $\tau_0 = \id$ is immediate from the definition, and it follows that $\tau_Q\tau_{-Q}=\tau_{Q-Q}=\id$.---%
 Part~\ref{it:warpedtranslate} can be obtained by explicit computation, e.g., from \cite[Eq.~(2.4)]{BuchholzSummersLechner:2011}, noting that $U(x)$ commutes with $\fpnp_k$ for the case $A \in \qcal^\infty$.---%
 In \ref{it:warpediso}, the inclusion $\tau_Q(\ccal^\infty)\subset \ccal^\infty$ was already noted and  $\tau_Q(\fcal^\infty)\subset\fcal^\infty$, $\tau_Q(\qf^\infty)\subset\qf^\infty$ then follow from \eqref{eq:qkcommute}. The map $\tau_Q$ is invertible in each case by \ref{it:warpedcompose}. Linearity is clear. Also, $\tau_Q(A\st) = \tau_Q(A)\st$ for any $A \in \ccal^\infty$ by \cite[Lemma~2.2(ii)]{BuchholzSummersLechner:2011}; this extends to $\fcal^\infty$ and $\qf^\infty$ with the help of \eqref{eq:qkcommute}, in the sense of form adjoints.---%
 For \ref{it:warpeddistribution}, if $\fpnp_{k'} A \fpnp_{k} =  A \fpnp_{k}$ where $k'$ depends on $k$ but not on $f$, then $\fpnp_{k'} \tau_Q(A(f)) \fpnp_{k}= \tau_Q(A(f)) \fpnp_{k}$ by an application of \eqref{eq:qkcommute}, and similarly for $\fpnp_{k} A \fpnp_{k'}$. Since $f \mapsto \tau_Q(A(f)) \fpnp_k$ and $f \mapsto \fpnp_k \tau_Q(A(f))$ are continuous by \eqref{eq:qkcommute} and \ref{it:warpedcontinuous}, the map $f \mapsto  \tau_Q(A(f))$ is a well-defined $\fcal^\infty$-valued distribution. Homogeneity of $A$ implies homogeneity of $\tau_Q(A)$ by an application of \ref{it:warpedtranslate}.
\end{proof}

It should be remarked that $\tau_Q$ is a vector space isomorphism, but is \emph{not} multiplicative -- it deforms the operator product in this sense.

For the action of $\tau_Q$ on homogeneous distributions, we can obtain more explicit results. We start with a simple formula which is intuitively obvious from the integral representation \eqref{eq:warpedintegral}; the full proof however involves details of the oscillatory integral and requires some work.

\begin{lemma}\label{lemma:homogdeformvec}
Let $A$ be a homogeneous $\fcal^\infty$-valued distribution on $\rbb^m$. In the sense of distributional kernels, it holds that
 \begin{equation}\label{eq:QCvector}
    \tau_Q(A(\thetav)) \rvector{}{\etav} = e^{i \varphi_A(\thetav) Q p(\etav) } A(\thetav) \rvector{}{\etav}.
 \end{equation}
 
\end{lemma}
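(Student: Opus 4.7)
The plan is to recast \eqref{eq:QCvector} in the equivalent form
\begin{equation}\label{eq:vecform-plan}
\tau_Q(A(\thetav))\rvector{}{\etav} = U(-Q\varphi_A(\thetav))\,A(\thetav)\rvector{}{\etav},
\end{equation}
which reflects the intuition that the warped convolution acts on a momentum eigenvector by a translation tuned to the momentum transfer of $A$. The two formulations are equivalent: homogeneity of $A(\thetav)$ with $x=-Q\varphi_A(\thetav)$ gives $U(-Q\varphi_A(\thetav))A(\thetav) = e^{-i\varphi_A(\thetav) Q\varphi_A(\thetav)}A(\thetav) U(-Q\varphi_A(\thetav))$, and the scalar phase vanishes by skew-symmetry of $Q$, so $U(-Q\varphi_A(\thetav))$ commutes through $A(\thetav)$. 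Since $\rvector{}{\etav}$ is a generalized $P$-eigendistribution with eigenvalue $p(\etav)$, we have $U(-Q\varphi_A(\thetav))\rvector{}{\etav} = e^{-iQ\varphi_A(\thetav)\cdot p(\etav)}\rvector{}{\etav} = e^{i\varphi_A(\thetav) Q p(\etav)}\rvector{}{\etav}$, matching \eqref{eq:QCvector}.

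To prove \eqref{eq:vecform-plan}, first smear $\etav$ by $h\in\dcal(\rbb^n)$ so that $\rvector{}{h}\in\hcal_n$ has compact spectral support for $P$ inside the forward light cone, and smear $\thetav$ by $g\in\dcal(\rbb^m)$; via \eqref{eq:qkcompat}, everything reduces to the action of the smooth operator $\fpnp_{k'} A(g) \fpnp_k \in \ccal^\infty$ on $\rvector{}{h}$, where the oscillatory integral of \cite{BuchholzSummersLechner:2011} converges in the Bochner sense to give
\begin{equation}\label{eq:tauqbochner-plan}
\tau_Q(A(g))\rvector{}{h} = \int dE(p)\; U(Qp)\,A(g)\,U(-Qp)\,\rvector{}{h}.
\end{equation}
Inserting the homogeneity relation $U(Qp)A(\thetav) = e^{i\varphi_A(\thetav)\cdot Qp} A(\thetav) U(Qp)$ into the integrand causes the pair $U(\pm Qp)$ to cancel, leaving $e^{i\varphi_A(\thetav)\cdot Qp}\,A(\thetav)\rvector{}{h}$. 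The residual $p$-dependence is the scalar phase $e^{-i(Q\varphi_A(\thetav))\cdot p}$, so the spectral calculus for $P$ identifies $\int e^{-i(Q\varphi_A(\thetav))\cdot p}\,dE(p) = U(-Q\varphi_A(\thetav))$ applied to the $p$-independent vector $A(\thetav)\rvector{}{h}$, which is \eqref{eq:vecform-plan} after unsmearing in $h$ and $g$.

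The chief technical obstacle is justifying the interchange of the oscillatory integral in \eqref{eq:tauqbochner-plan} with the pointwise homogeneity relation at the distributional level in $\thetav$, since $\tau_Q$ is originally defined on $\ccal^\infty$ only via Fr\'echet seminorms. The safest route is to carry out the computation for a single smeared operator $A(g)\in\fcal^\infty$ and then recover the distributional identity in $\thetav$ via continuity in $g$, guaranteed by Prop.~\ref{proposition:tauq}\ref{it:warpedcontinuous} and \ref{it:warpeddistribution}; the fact that $\tau_Q(A(\thetav))$ is again an $\fcal^\infty$-valued homogeneous distribution with the same momentum transfer $\varphi_A$ is exactly what this step requires. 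Alternatively, one can regularize the oscillatory integral by Rieffel-style cutoffs \cite{Rieffel:1993} $\chi_\epsilon$, push the phases through at each $\epsilon>0$, and use the rapid decay of $\hat h$ in momentum space together with the smoothness of $\varphi_A$ to pass to the limit $\epsilon\to 0$.
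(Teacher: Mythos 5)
Your argument is correct in substance and rests on the same computational core as the paper's proof: use homogeneity to convert $U(Qp)A(\thetav)U(Qp)^\ast$ into the phase $e^{i\varphi_A(\thetav)\cdot Qp}$ times $A(\thetav)$, and then recognize the residual $p$-dependence as a translation evaluated on the momentum content of $\rvector{}{\etav}$. Your preliminary reformulation of \eqref{eq:QCvector} as $\tau_Q(A(\thetav))\rvector{}{\etav}=U(-Q\varphi_A(\thetav))A(\thetav)\rvector{}{\etav}$, including the observation that the commutation phase $e^{-i\varphi_A(\thetav)Q\varphi_A(\thetav)}$ vanishes by skew-symmetry, is fine. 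The difference lies in how the warping integral is handled. The paper never leaves the regularized oscillatory integral of \cite[Eq.~(2.4)]{BuchholzSummersLechner:2011}: after smearing with $f,g$ it absorbs \emph{all} phases (both from homogeneity of $A$ and from $U(y)\rvector{}{g}$) into a single test function $F_\epsilon(\thetav,\etav)$, observes that the compactly supported $\tilde h_2$ truncates the $x$-integral, and computes $\lim_{\epsilon\to 0}F_\epsilon = e^{i\varphi_A(\thetav)Qp(\etav)}fg$ as a delta-sequence limit in $\dcal(\rbb^{m+n})$; this is entirely self-contained. Your primary route instead asserts the spectral-measure representation $\tau_Q(A(g))\rvector{}{h}=\int dE(p)\,U(Qp)A(g)U(-Qp)\rvector{}{h}$ ``in the Bochner sense.'' That is the one soft spot: Eq.~(2.4) of \cite{BuchholzSummersLechner:2011} defines $\tau_Q$ as a limit of regularized double integrals, not as an integral against $dE$, and the passage to the spectral form on vectors of compact spectral support is itself a nontrivial statement --- essentially the reconciliation of the Riemann-sum definition of \cite{BuchholzSummers:2008} with the oscillatory integral, and uncomfortably close to a smeared version of the very identity being proved. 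If you cite that equivalence explicitly (rather than labelling it Bochner convergence, which it is not), your main line closes; otherwise the regularization fallback you sketch in your last sentence is precisely the paper's argument and should be promoted from an ``alternative'' to the actual proof. Your use of Prop.~\ref{proposition:tauq} to recover the distributional statement in $\thetav$ from the smeared one is a legitimate way to organize the continuity bookkeeping that the paper handles implicitly through the explicit convergence of $F_\epsilon$ in $\dcal(\rbb^{m+n})$.
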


\begin{proof}
Let $f\in\dcal(\rbb^m)$, $g\in\dcal(\rbb^n)$. Choose $h_1,h_2\in\scal(\rbb^2)$ such that $h_1=1$ on a neighborhood of 0, $h_2(0)=1$, and such that the Fourier transform $\tilde h_2$ of $h_2$ has compact support. Since $\rvector{}{g}$ has fixed particle number and is smooth with respect to translations, we have by \cite[Eq.~(2.4)]{BuchholzSummersLechner:2011},
\begin{equation}
  \tau_Q(A(f)) \rvector{}{g} = (2\pi)^{-2}  \lim_{\epsilon \to 0} \iint dx\,dy\, h_1(\epsilon x) h_2(\epsilon y) e^{-ix \cdot y} U(Qx) A(f) U(Qx)^\ast U(y) \rvector{}{g}.
\end{equation}
Using homogeneity, this yields
\begin{equation}\label{eq:crlimit}
\begin{aligned}
  \tau_Q(A(f)) \rvector{}{g} 
 &= (2\pi)^{-2}\lim_{\epsilon \to 0}  \iint dx\,dy\, h_1(\epsilon x) h_2(\epsilon y) e^{-ix \cdot y} A(e^{i \varphi_A(\cdotarg) Qx}f) \rvector{}{e^{i p(\cdotarg) y}g} 
\\
 &= \lim_{\epsilon \to 0}A\rvector{}{F_\epsilon},
\end{aligned}
\end{equation}
where $F_\epsilon\in\dcal(\rbb^{m+n})$ is the test function
\begin{equation}
\begin{aligned}
 F_\epsilon (\thetav,\etav) &= (2\pi)^{-2} \iint dx\,dy\, h_1(\epsilon x) h_2(\epsilon y) e^{i \varphi_A(\thetav)Qx + i (p(\etav)-x) \cdot y}f(\thetav)g(\etav)
\\
 &= \int dx\,  h_1(\epsilon x) \frac{1}{2\pi\epsilon} \tilde h_2\big(\epsilon^{-1}(p(\etav)-x)\big) \, 
   e^{i\varphi_A(\thetav)Qx} f(\thetav)g(\etav).
\end{aligned}
\end{equation}
Here $\tilde h_2$ restricts the integral to a compact set, and for sufficiently small $\epsilon$, we can then replace $h_1(\epsilon x)$ with $1$. Further, $\frac{1}{2\pi\epsilon} \tilde h_2(\epsilon^{-1} \cdotarg)$ is a delta sequence, so that as $\epsilon \to 0$,
\begin{equation}
 F_\epsilon (\thetav,\etav) \to e^{i\varphi_A(\thetav) Q p(\etav)} f(\thetav)g(\etav)\quad \text{in } \dcal(\rbb^{m+n}).
\end{equation}
Inserted into \eqref{eq:crlimit}, this gives \eqref{eq:QCvector}.
\end{proof}

Using the previous lemma, we can deduce a crucial relation regarding the product of two deformed homogeneous distributions.

\begin{lemma}\label{lemma:homogdeformprod}
If $A,B$ are two homogeneous $\fcal^\infty$-valued distributions, then, in the sense of distributional kernels,
\begin{equation}\label{eq:homogdeformprod}
\tau_{Q}(A(\thetav)) \tau_{Q}(B(\etav)) = e^{i\varphi_A(\thetav)Q\varphi_B(\etav)}\tau_{Q}(A(\thetav)B(\etav)).
\end{equation}
\end{lemma}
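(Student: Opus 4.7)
The plan is to verify the identity \eqref{eq:homogdeformprod} as $\fcal^\infty$-valued distributions in $\thetav,\etav$ by pairing both sides against matrix elements of the form $\hscalar{\rvector{}{\nuv}}{\,\cdotarg\,\rvector{}{\lambdav}}$. Since linear combinations of the $\rvector{}{g}$ are dense in $\fpn$, this suffices. In the free-field setting $S=1$ the creators commute, so $\lvector{}{\nuv}=\rvector{}{\nuv}$, a fact I shall use to transfer factors onto the bra side. Lemma~\ref{lemma:homogdeformvec} is the main tool.

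For the right-hand side of \eqref{eq:homogdeformprod}, $A(\thetav)B(\etav)$ is a homogeneous $\fcal^\infty$-valued distribution with momentum transfer $\varphi_A(\thetav)+\varphi_B(\etav)$, so a single application of Lemma~\ref{lemma:homogdeformvec} gives
\begin{equation}
e^{i\varphi_A(\thetav)Q\varphi_B(\etav)}\hscalar{\rvector{}{\nuv}}{\tau_Q(A(\thetav)B(\etav))\rvector{}{\lambdav}}=e^{i\varphi_A(\thetav)Q\varphi_B(\etav)+i(\varphi_A(\thetav)+\varphi_B(\etav))Qp(\lambdav)}\hscalar{\rvector{}{\nuv}}{A(\thetav)B(\etav)\rvector{}{\lambdav}}.
\end{equation}
For the left-hand side, I first apply Lemma~\ref{lemma:homogdeformvec} to the inner factor, $\tau_Q(B(\etav))\rvector{}{\lambdav}=e^{i\varphi_B(\etav)Qp(\lambdav)}B(\etav)\rvector{}{\lambdav}$. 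The outer $\tau_Q(A(\thetav))$ I then move onto the bra side using $\tau_Q(A)^{\ast}=\tau_Q(A^{\ast})$ from Prop.~\ref{proposition:tauq}\ref{it:warpediso}, noting that $A(\thetav)^{\ast}$ is itself homogeneous with momentum transfer $-\varphi_A(\thetav)$. A second application of Lemma~\ref{lemma:homogdeformvec} to $\tau_Q(A(\thetav)^{\ast})\rvector{}{\nuv}$, together with complex conjugation of the resulting phase, yields
\begin{equation}
e^{i\varphi_A(\thetav)Qp(\nuv)+i\varphi_B(\etav)Qp(\lambdav)}\hscalar{\rvector{}{\nuv}}{A(\thetav)B(\etav)\rvector{}{\lambdav}}.
\end{equation}

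Comparing the two expressions, their phases differ by the factor $\varphi_A(\thetav)Q\bigl(p(\nuv)-\varphi_B(\etav)-p(\lambdav)\bigr)$. By translation covariance $[P,A(\thetav)B(\etav)]=(\varphi_A(\thetav)+\varphi_B(\etav))\,A(\thetav)B(\etav)$ and the fact that $\rvector{}{\lambdav},\rvector{}{\nuv}$ are momentum eigendistributions, the matrix element $\hscalar{\rvector{}{\nuv}}{A(\thetav)B(\etav)\rvector{}{\lambdav}}$ is supported on the momentum-conservation hypersurface $\{p(\nuv)=\varphi_A(\thetav)+\varphi_B(\etav)+p(\lambdav)\}$. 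On this set the exponent reduces to $\varphi_A(\thetav)Q\varphi_A(\thetav)$, which vanishes by skew-symmetry of $Q$; the two matrix-element expressions therefore coincide as distributions, and \eqref{eq:homogdeformprod} follows by density.

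The part I expect to require the most care is the final support-annihilation step at the distributional level, namely the claim that a smooth phase factor which is zero on the momentum-conservation hypersurface annihilates the matrix-element distribution upon multiplication. This is guaranteed by the very specific structure here: translation covariance writes the matrix element as a $\delta$-distribution enforcing total-momentum balance times a regular kernel, and multiplication of such a $\delta$ by any smooth function vanishing on its support gives zero, which closes the argument.
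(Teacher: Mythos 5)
Your argument is correct and follows essentially the same route as the paper: apply Lemma~\ref{lemma:homogdeformvec} to the ket for the inner factor, move the outer deformed factor to the bra via $\tau_Q(A)^\ast=\tau_Q(A^\ast)$ and apply the lemma again, then use the translation-covariance support property of $\hscalar{\lvector{}{\nuv}}{A(\thetav)B(\etav)\rvector{}{\lambdav}}$ together with skew-symmetry of $Q$ to match the phases. For the final support-annihilation step you need not invoke a ``$\delta$ times regular kernel'' structure: the covariance identity $e^{i(-p(\nuv)+p(\lambdav)+\varphi_A+\varphi_B)\cdot x}T=T$ holds for each fixed $x$, and choosing $x$ so that the exponent becomes $i\varphi_A(\thetav)Q\bigl(p(\nuv)-\varphi_B(\etav)-p(\lambdav)-\varphi_A(\thetav)\bigr)$ gives exactly the phase identity you need on the nose.
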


\begin{proof}
First, we remark that the support of the distribution $\hscalar{\lvector{}{\thetav'}}{ A(\thetav)B(\etav) \rvector{}{\etav'}}$ is concentrated on the hypersurface $p(\thetav')-p(\etav') = \varphi_A(\thetav)+\varphi_B(\etav)$. This is seen by computing, for $x \in \rbb^2$,
\begin{equation}\label{eq:trans1}
\begin{aligned}
  \hscalar{\lvector{}{\thetav'}}{ A(\thetav)B(\etav) \rvector{}{\etav'}} 
 &= \hscalar{U(x) \lvector{}{\thetav'}}{ U(x) A(\thetav)B(\etav)  U(x)^\ast\,U(x) \rvector{}{\etav'}} 
\\
 &= e^{i(-p(\thetav')+p(\etav')+\varphi_A(\thetav)+\varphi_B(\etav))\cdot x}\hscalar{\lvector{}{\thetav'}}{ A(\thetav)B(\etav) \rvector{}{\etav'}},
\end{aligned}
\end{equation}
where homogeneity of $A,B$ and covariance properties of $\lvector{}{\cdotarg}, \rvector{}{\cdotarg}$ have been used.
But \eqref{eq:trans1} can hold for all $x$ only if the support of the distribution is contained in the  surface mentioned.

Now we compute by applying Lemma~\ref{lemma:homogdeformvec} twice,
\begin{equation}\label{eq:qab1}
\begin{aligned}
 \hscalar{\lvector{}{\thetav'}}{ \tau_Q(A(\thetav)) \tau_Q(B(\etav)) \rvector{}{\etav'}}
&= e^{i \varphi_A(\thetav) Q p(\thetav')} e^{i \varphi_B(\etav) Q p(\etav')}
 \hscalar{\lvector{}{\thetav'}}{ A(\thetav)B(\etav) \rvector{}{\etav'}}
\\
&= e^{i \varphi_A(\thetav) Q \varphi_B(\etav)} e^{i p(\thetav') Q p(\etav')}
 \hscalar{\lvector{}{\thetav'}}{ A(\thetav)B(\etav) \rvector{}{\etav'}},
\end{aligned}
\end{equation}
where we used skew-symmetry of $Q$ and the support property of the distribution as remarked above. Likewise, we obtain
\begin{equation}\label{eq:qab2}
\begin{aligned}
 \hscalar{\lvector{}{\thetav'}}{ \tau_Q(A(\thetav) B(\etav)) \rvector{}{\etav'}}
&= e^{i (\varphi_A(\thetav) +\varphi_B(\etav)) Q p(\etav')} 
 \hscalar{\lvector{}{\thetav'}}{ A(\thetav)B(\etav) \rvector{}{\etav'}}
\\
&= e^{i p(\thetav') Q p(\etav')}
 \hscalar{\lvector{}{\thetav'}}{ A(\thetav)B(\etav) \rvector{}{\etav'}}.
\end{aligned}
\end{equation}
Eqs.~\eqref{eq:qab1} and \eqref{eq:qab2} together imply the result.
\end{proof}

At this stage, we can identity the deformed theory with an integrable model. To that end we set $\zd(\theta)=\tau_Q(\ad(\theta))$, $z(\eta)=\tau_Q(a(\eta))$. Applying Lemma~\ref{lemma:homogdeformprod} twice, we find that these $z,\zd$ fulfill
\begin{equation}\label{eq:zzrfromq}
\begin{aligned}
\zd(\theta)\zd(\theta') &= e^{2ip(\theta)Qp(\theta')}\zd(\theta')\zd(\theta),\\
z(\eta)z(\eta') &= e^{2ip(\eta)Qp(\eta')}z(\eta')z(\eta),\\
z(\eta)\zd(\theta) &= e^{2ip(\theta)Qp(\eta)}\zd(\theta)z(\eta) + \delta(\theta-\eta) \idop.
\end{aligned}
\end{equation}
Now there is only a one-parameter family of $2\times 2$ matrices which are skew symmetric with respect to the Minkowski scalar product; we write them as
\begin{equation}
 Q = -\frac{a}{2\mu^2} \begin{pmatrix} 0 & 1 \\ 1 & 0 \end{pmatrix}
\end{equation}
with a dimensionless real constant $a$. With this, the equations \eqref{eq:zzrfromq} are just the Zamolodchikov relations with the scattering function
\begin{equation}\label{eq:QSform}
   S(\theta) = e^{i a \sinh \theta}.
\end{equation}
We can then unitarily identify our ``free'' Hilbert space $\hcal$ with the $S$-symmetric Fock space over $\hcal_1$ as introduced in Sec.~\ref{sec:hilbertspace}, mapping the Zamolodchikov operators $z,\zd$ to their counterparts defined there, while preserving space-time translations and boosts. For details of this isomorphism, see \cite[Lemma~5.7]{Lechner:2008}.

We now proceed to define the $Q$-commutator, as announced.

\begin{definition}
For $A,B \in \ccal^\infty$, the $Q$-commutator is
\begin{equation}\label{Qcom}
[A,B]_{Q}:=AB -\tau_{2Q}\Big(\tau_{-2Q}(B)\tau_{-2Q}(A)\Big).
\end{equation}
We use the same definition if $A,B \in \qcal^\infty$ and at least \emph{one} of them is in $\fcal^\infty$.
\end{definition}

For homogeneous distributions $A(\thetav),B(\etav)$, we can compute an explicit expression for the $Q$-commutator:
from Lemma~\ref{lemma:homogdeformprod} and Prop.~\ref{proposition:tauq}\ref{it:warpedcompose} we obtain
\begin{equation}\label{eq:Qcommhomog}
[A(\thetav),B(\etav)]_{Q}=A(\thetav)B(\etav)-e^{2i\varphi_{A}(\thetav)Q\varphi_{B}(\etav)}B(\etav)A(\thetav).
\end{equation}
In particular, the $Q$-commutator expression is again homogeneous.

We note that the $Q$-commutator is bilinear and fulfills the following ``deformed'' versions of the standard properties of a commutator. We formulate them for homogeneous distributions only. For general elements of $\qf^\infty$ or $\ccal^\infty$, we could obtain similar relations by decomposing them into homogeneous distributions, either by a spectral decomposition in the sense of Arveson \cite{Arveson:1974} with respect to the action of the translation group, or indeed by using the operator expansion of Thm.~\ref{theorem:arakiexpansion}.

\begin{proposition}
For homogeneous $\fcal^\infty$-valued distributions with kernels $A(\thetav), B(\etav), C(\xiv)$, the $Q$-commutator satisfies

\begin{enumerate}
\renewcommand{\theenumi}{(\roman{enumi})}
\renewcommand{\labelenumi}{\theenumi}
\item\label{it:qcommanti}
 anticommutativity:
\begin{equation}
[A(\thetav),B(\etav)]_{Q}=-e^{2i\varphi_{A}(\thetav)Q\varphi_{B}(\etav)}[B(\etav),A(\thetav)]_{Q};
\end{equation}

\item \label{it:qcommleibniz}
Leibniz rule:
\begin{equation}\label{eq:deformedleibniz}
[A(\thetav),B(\etav)C(\xiv)]_{Q}=[A(\thetav),B(\etav)]_{Q}C(\xiv)+e^{2i\varphi_{A}(\thetav)Q\varphi_{B}(\etav)}B(\etav)[A(\thetav),C(\xiv)]_{Q};
\end{equation}

\item\label{it:qcommjacobi}
 Jacobi identity:
\begin{equation}
e^{-2i\varphi_{A}(\thetav)Q\varphi_{C}(\xiv)}[A(\thetav),[B(\etav),C(\xiv)]_{Q}]_{Q}+ \mathrm{ cyclic\, permutations } =0.\label{Qjacobi}
\end{equation}
\end{enumerate}
\end{proposition}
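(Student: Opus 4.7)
The plan is to reduce all three identities to direct algebraic manipulations of the explicit formula \eqref{eq:Qcommhomog}, which expresses the $Q$-commutator of homogeneous distributions purely in terms of the ordinary operator product and a scalar exponential phase. The only nontrivial inputs beyond that formula are (a) the skew-symmetry of $Q$ with respect to the Minkowski scalar product, which gives $\varphi_1 Q \varphi_2 = -\varphi_2 Q \varphi_1$, and (b) the fact that products of homogeneous distributions are again homogeneous with momentum transfer equal to the sum of the factors. Throughout, the distributions $A(\thetav)B(\etav)C(\xiv)$ are well-defined $\fcal^\infty$-valued distributions on independent variables, so all manipulations below are meaningful as distributional identities.

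For \emph{(i) anticommutativity}, I would just substitute the definition twice:
\begin{equation*}
-e^{2i\varphi_A Q \varphi_B}[B,A]_Q = -e^{2i\varphi_A Q\varphi_B}\bigl(BA - e^{2i\varphi_B Q \varphi_A}AB\bigr) = AB - e^{2i\varphi_A Q\varphi_B}BA,
\end{equation*}
where skew-symmetry of $Q$ cancels the exponentials on the second term. For \emph{(ii) the Leibniz rule}, I would expand
\begin{equation*}
[A,B]_Q C + e^{2i\varphi_A Q\varphi_B}B[A,C]_Q = \bigl(AB - e^{2i\varphi_A Q\varphi_B}BA\bigr)C + e^{2i\varphi_A Q\varphi_B}B\bigl(AC - e^{2i\varphi_A Q\varphi_C}CA\bigr);
\end{equation*}
the two middle terms cancel, and the remaining ones combine into $ABC - e^{2i\varphi_A Q(\varphi_B+\varphi_C)}BCA$, which is precisely $[A,BC]_Q$ once we use that $\varphi_{BC} = \varphi_B + \varphi_C$.

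The main work is \emph{(iii) the Jacobi identity}; it is not conceptually deep but requires careful bookkeeping. Writing $\alpha=\varphi_A$, $\beta=\varphi_B$, $\gamma=\varphi_C$ and using \eqref{eq:Qcommhomog}, together with the observation that $[B,C]_Q$ is homogeneous with momentum transfer $\beta+\gamma$, one obtains
\begin{equation*}
[A,[B,C]_Q]_Q = ABC - e^{2i\beta Q\gamma}ACB - e^{2i\alpha Q(\beta+\gamma)}BCA + e^{2i\alpha Q(\beta+\gamma)+2i\beta Q\gamma}CBA,
\end{equation*}
and the multiplier $e^{-2i\alpha Q\gamma}$ in \eqref{Qjacobi} redistributes these four terms. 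I would then write out the two cyclic counterparts by the substitutions $(A,B,C)\mapsto(B,C,A)$ and $(A,B,C)\mapsto(C,A,B)$, and collect the resulting twelve terms by operator ordering ($ABC,ACB,BCA,CBA,BAC,CAB$). Each ordering appears in exactly two of the three summands, and in every case the two exponential coefficients are negatives of each other after applying $xQy=-yQx$; for instance, the $ABC$ terms carry phases $e^{-2i\alpha Q\gamma}$ and $-e^{2i\gamma Q\alpha}$, which agree in absolute value and cancel. The only obstacle is not to lose or miscount any of the twelve terms; once the table is written down, cancellation is mechanical. Since the decomposition into homogeneous pieces generates arbitrary elements of $\qf^\infty$ (via Theorem \ref{theorem:arakiexpansion} or an Arveson-type spectral decomposition), these identities formally extend, but the statement as given only requires the homogeneous case.
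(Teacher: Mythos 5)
Your proposal is correct and follows exactly the route the paper indicates (repeated application of Eq.~\eqref{eq:Qcommhomog} together with skew-symmetry of $Q$ and additivity of the momentum transfer); the paper simply omits the computation as straightforward, and your bookkeeping for the twelve Jacobi terms checks out.
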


All three relations can be obtained by repeated application of Eq.~\eqref{eq:Qcommhomog}; the computation is  straightforward, and we omit it here.

As another direct consequence of Eq.~\eqref{eq:Qcommhomog}, we can rewrite the Zamolodchikov relations \eqref{eq:zzrfromq} in terms of $Q$-commutators as follows:
\begin{equation}
 [z^{\dagger}(\theta),z^{\dagger}(\theta')]_{Q}=0,
\quad
  [z(\eta),z(\eta')]_{Q}=0,
\quad
  [z(\eta),z^{\dagger}(\theta)]_{Q}=\delta(\theta-\eta) \idop.
\end{equation}
That is, the Zamolodchikov operators fulfill CCR-like relations with respect to the $Q$-commutator; this stresses the analogy to the graded commutator in the CAR case.
Moreover, we can obtain again from Eq.~\eqref{eq:Qcommhomog},
\begin{equation}
[z(\xi),z^{\dagger m}(\thetav)z^{n}(\etav)]_{Q}=z(\xi)z^{\dagger m}(\thetav)z^{n}(\etav)
  -e^{2i(p(\thetav)-p(\etav))Qp(\xi)}z^{\dagger m}(\thetav)z^{n}(\etav)z(\xi),
\end{equation}
which implies by repeated application of the Zamolodchikov relations \eqref{eq:zzrfromq},
\begin{multline}
[z(\xi),z^{\dagger m}(\thetav)z^{n}(\etav)]_{Q}=\sum_{j=1}^{m}\Big(\prod_{l=1}^{j-1}e^{2ip(\theta_{l})Qp(\xi)}\Big)\delta(\theta_{j}-\xi)z^{\dagger}(\theta_{1})\ldots \widehat{z^{\dagger}(\theta_{j})}\ldots z^{\dagger}(\theta_{m})z^{n}(\etav)\\
=m\operatorname{Sym}_{S^{-1},\thetav}\Big( \delta(\xi-\theta_{1})z^{\dagger m-1}(\theta_{2},\ldots,\theta_{m})z^{n}(\etav)\Big).\label{QcommAz}
\end{multline}
Similarly, we have
\begin{equation}
[z^{\dagger m}(\thetav)z^{n}(\etav),z^{\dagger}(\xi)]_{Q}
=n \operatorname{Sym}_{S^{-1},\etav}\Big( \delta(\xi-\eta_{n})z^{\dagger m}(\thetav)z^{n-1}(\eta_{1},\ldots ,\eta_{n-1})\Big).\label{QcommAzdag}
\end{equation}
This finally enables us to prove the proposed form of the expansion coefficients.

\begin{theorem}
Let $S$ be of the form \eqref{eq:QSform}. The coefficients $\cme{m,n}{A}$, where $A\in \qf^\infty$, can be expressed as
\begin{equation}\label{Snested}
\cme{m,n}{A}(\thetav,\etav)= \bighscalar{\Omega}{ [z(\theta_m), \ldots [z(\theta_1),[\ldots[A,\zd(\eta_n)]_Q\ldots ,\zd(\eta_1)]_Q \ldots ]_Q \;\Omega  }.
\end{equation}
\end{theorem}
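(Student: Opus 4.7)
The plan is to reduce the claim by linearity to the ``basis'' quadratic forms $A = z^{\dagger m'}z^{n'}(g)$ supplied by Theorem~\ref{theorem:arakiexpansion}, and then to evaluate the nested $Q$-commutator explicitly by iterating \eqref{QcommAz} and \eqref{QcommAzdag}. Both sides of \eqref{Snested} are linear in $A$: the left-hand side by direct inspection of \eqref{eq:fmndef}, the right-hand side by bilinearity of the $Q$-commutator. Given $A\in\qf^\infty$, Proposition~\ref{proposition:expansionunique} represents $A$ as a series in $z^{\dagger m'}z^{n'}(\cme{m',n'}{A})$, and since each matrix element of either side of \eqref{Snested} involves only finitely many terms of this series, it suffices to verify the formula for $A = z^{\dagger m'}z^{n'}(g)$. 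By continuity in $g$ we may further take $g$ to be an elementary tensor $g(\thetav',\etav')=\prod_i g_i(\theta'_i)\prod_j h_j(\eta'_j)$.

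For such $A$, I would strip off the $n$ inner commutators from the right using \eqref{QcommAzdag}. Each step removes the rightmost $z$-factor of $A$, produces a delta function identifying one of the $\eta'_j$ with an external $\eta_k$, and introduces $S$-factors from the $S^{-1}$-symmetrization. If $n > n'$ the iteration yields zero, since at some point one must $Q$-commute $\zd$ with a pure $z^\dagger$-monomial and the prefactor on the right of \eqref{QcommAzdag} becomes zero; for $n \leq n'$ one is left with a product of deltas and $S$-factors multiplying $z^{\dagger m'}z^{n'-n}$. Applying \eqref{QcommAz} analogously to the outer $m$ commutators either gives zero (if $m > m'$) or produces a further product of $S$-symmetric delta factors on the left, leaving $z^{\dagger m'-m}z^{n'-n}$ in the middle. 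Taking the vacuum expectation forces $m=m'$ and $n=n'$, in agreement with the vanishing behaviour of $\cme{m,n}{z^{\dagger m'}z^{n'}(g)}$ recorded in Lemma~\ref{lemma:fmnbasis}.

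In the surviving case $m=m'$, $n=n'$, what remains is a sum of products of delta functions pairing $\thetav$ with the $\thetav'$-arguments of $g$ and $\etav$ with (reversed) $\etav'$-arguments, weighted by accumulated $S$-phases. Using the identity \eqref{symstheta} to convert $S^{-1}$-symmetrizations in the dummy variables of $g$ into $S$-symmetrizations in the external variables, this sum collapses into $m!n!\operatorname{Sym}_{S,\thetav}\operatorname{Sym}_{S,\etav}g(\thetav,\etav)$, which is precisely the value of $\cme{m,n}{z^{\dagger m}z^{n}(g)}$ given by Proposition~\ref{proposition:expansionunique}. The main obstacle lies in the combinatorial accounting of the $S$-factors: each of the $m+n$ commutator steps introduces an $S^{-1}$-symmetrization over a shrinking set of variables, and these accumulated phases must be shown to conspire with those from the Zamolodchikov exchange relations so as to match exactly the symmetrization pattern demanded by \eqref{basisprop}. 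Passing to elementary tensors at the outset keeps each intermediate expression transparent and reduces this to a finite, if tedious, computation.
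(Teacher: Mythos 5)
Your proposal is correct and follows essentially the same route as the paper: reduce by linearity (via the operator expansion) to monomials $z^{\dagger m'}z^{n'}(g)$, evaluate the nested $Q$-commutator by iterating \eqref{QcommAz} and \eqref{QcommAzdag}, dispose of the mismatched cases by the vanishing of the prefactors and of the vacuum expectation value, and match the surviving term with Lemma~\ref{lemma:fmnbasis} using \eqref{symstheta}. The only point worth adding is the paper's preliminary remark that the expansion terms $z^{\dagger m}z^{n}(\cme{m,n}{A})$ of an $A\in\qf^\infty$ again lie in $\qf^\infty$, which is needed for the $Q$-commutators in the reduction to be well defined.
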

\begin{proof}
We first remark that with $A$, also its expansion terms $z^{\dagger m} z^{n}(\cme{m,n}{A})$ are elements of $\qf^\infty$. Namely, using Prop.~\ref{proposition:fmnpoincare}, we find for the derivatives $\partial^\kappa$ with a multi-index $\kappa$,
\begin{equation}
   \partial^\kappa z^{\dagger m} z^{n}(\cmelong{m,n}{A}) = z^{\dagger m} z^{n}(\cmelong{m,n}{\partial^\kappa A});
\end{equation}
due to Prop.~\ref{proposition:zzdcrossnorm} and \ref{proposition:fmnbound}, these have finite norms when restricted to spaces of fixed particle number.

By Thm.~\ref{theorem:arakiexpansion}, it therefore suffices to prove the statement \eqref{Snested} for $A=z^{\dagger m'} z^{n'}(f)$.
Now for this particular $A$, or rather for its kernel $A(\thetav',\etav')=z^{\dagger m'}(\thetav') z^{n'}(\etav')$, the nested $Q$-commutator in \eqref{Snested} gives by repeated application of Eqs.~\eqref{QcommAz} and \eqref{QcommAzdag}:
\begin{multline}\label{applcomm}
 \bighscalar{\Omega}{ [z(\theta_m), \ldots [z(\theta_1),[\ldots[A,\zd(\eta_n)]_Q\ldots ,\zd(\eta_1)]_Q \ldots ]_Q \;\Omega  }
=m!n!\operatorname{Sym}_{S^{-1},\thetav'}\operatorname{Sym}_{S^{-1},\etav'}
\\
\Big( \prod_{j=1}^m \delta(\theta_j-\theta_{j}')
\prod_{k=0}^{n-1} \delta(\eta_{n-k}-\eta_{n'-k}')\;
z^{\dagger m'-m}(\theta'_{m+1},\ldots,\theta_{m'}') 
z^{n'-n}(\eta'_{1},\ldots,\eta_{n'-n}') 
\Big)
\end{multline}
if $m'\geq m$, $n'\geq n$, and the right hand side vanishes otherwise. Now if $m'>m$ or $n'>n$, the vacuum expectation value of the right hand side of \eqref{applcomm} vanishes. Therefore, we find:
\begin{equation}
 \begin{aligned}
 \bighscalar{\Omega}{ [z(\theta_m), \ldots &[z(\theta_1),[\ldots[A,\zd(\eta_n)]_Q\ldots ,\zd(\eta_1)]_Q \ldots ]_Q \;\Omega  } 
\\
&=m!n!\delta_{m,m'}\delta_{n,n'}\operatorname{Sym}_{S^{-1},\thetav'}\operatorname{Sym}_{S^{-1},\etav'}\Big( \delta^{m}(\thetav-\thetav') \delta^{n}(\etav-\etav')\Big)\\
&= m!n!\delta_{m,m'}\delta_{n,n'}\operatorname{Sym}_{S,\thetav}\delta^{m}(\thetav-\thetav')\operatorname{Sym}_{S,\etav}\delta^{n}(\etav-\etav').
 \end{aligned}
\end{equation}
We have used \eqref{symstheta} here. This matches the left hand side of \eqref{Snested} because of Lemma~\ref{lemma:fmnbasis}.
\end{proof}

The theorem shows that the expansion coefficients can be expressed in terms of deformed commutators if the scattering function is of the form \eqref{eq:QSform}. It would be interesting to find similar structures for general $S$. On a formal level, this should in fact be possible: One could make use of the more general deformation scheme in \cite{Lechner:2011} to derive a suitable ``$S$-commutator''. Alternatively, and more directly, one could use the operator expansion itself to define the deformed commutator, imposing the relations $[z(\eta),\zd(\theta)]_S = \delta(\theta-\eta)\idop$, etc., and extending them by a deformed Leibniz rule as in \eqref{eq:deformedleibniz}. However, while such a definition might make sense on a formal level, its functional analytic properties (e.g., whether the $S$-commutator of two bounded operators would be bounded) remain unclear at this time.

\section{Conclusions and outlook}\label{sec:conclusion}

In this paper, we have established the operator expansion \eqref{eq:expansionz} in integrable models in a precise sense, defined on a space of quadratic forms. We found explicit (if intricate) expressions for the expansion coefficients $\cme{m,n}{A}$ as linear functionals of the quadratic form $A$. In some models, these can be rewritten in terms of nested ``deformed commutators'', in generalization of the free field situation.

We remark that it is also possible to visualize the combinatorics of the expansion coefficients using a diagrammatic representation: Given a contraction $C$, the pairs of contracted indices would correspond to internal lines, and the non-contracted indices to external lines. Each internal line would yield a factor $-\delta(\cdotarg)$, combining to the factor $(-1)^{|C|} \delta_C$ in \eqref{eq:fmndef}, and each crossing of lines would indicate an $S$-factor, combining to the factor $S_C$. Since this notation is not directly relevant to the analysis at hand, we do not elaborate it further here.

As a next step \cite{BostelmannCadamuro:characterization-wip}, we are planning to characterize the localization of $A$ in bounded regions in terms of analyticity properties of its expansion coefficients $\cme{m,n}{A}$, making precise what was suggested in \cite{SchroerWiesbrock:2000-1}. For this purpose, the behavior of $\cme{m,n}{A}$ under space-time symmetries, in particular reflections, as established in Sec.~\ref{sec:fsymtrans} will play a crucial role. 

We have restricted our attention to theories with a rather simple particle spectrum, namely, consisting of a single species of uncharged scalar particle. This was mostly to avoid formal complications. The overall framework we use can be generalized to a richer particle spectrum \cite{LechnerSchuetzenhofer:2012}, where the scattering function $S$ is then replaced with a matrix-valued function. The structural results for the operator expansion should essentially be the same in the more general case.

It should also be noted that the operator expansion in Eq.~\eqref{eq:expansionz}, as written, is applicable only to situations where the scattering function $S$ has no poles in the physical strip. This may not be apparent, since Thm.~\ref{theorem:arakiexpansion} does not require any analytic continuation of $S$ at all. Yet in models with poles in the physical strip (such as sine-Gordon), the results of \cite{Lechner:2008} about local operators do not apply, and the Hilbert space as defined in Sec.~\ref{sec:hilbertspace} is likely too small to accomodate local observables. It would need to be extended in order to include extra states corresponding to the poles (``bound states''). Nevertheless, we expect that a suitably generalized version of Thm.~\ref{theorem:arakiexpansion} will hold on the extended Hilbert space.

Since our methods fit into the context of deformations of quantum field theories, specifically, of warped convolutions, and since these give a generalization of integrable models to higher space-time dimensions, it is natural to ask whether the operator expansion can be generalized to higher space-time dimensions as well. Using methods as established in Sec.~\ref{sec:warped}, we do indeed think that this is the case. As in the 1+1 dimensional situation, the ``basis'' $z^{\dagger m} z^n$ of the expansion would depend on the choice of the deformation matrix $Q$; but in higher dimensions, there is a much larger choice for $Q$, and in more geometrical terms, the expansion would depend on the choice of a wedge region in Minkowski space. Since however the literature suggests \cite{BuchholzSummers:2007,BuchholzSummersLechner:2011} that these higher-dimensional models contain few, if any, observables localized in bounded regions, this falls somewhat outside our proposed line of investigation.

Besides the possibility of characterizing local operators, to be discussed elsewhere \cite{BostelmannCadamuro:characterization-wip}, we think that the present operator expansion may be helpful in resolving further open problems in the nonperturbative treatment of integrable quantum field theories. Like in free field theory, the expansion should be linked to the point field structure of the theory, allowing to identify pointlike fields and possibly their operator product expansions \cite{WilZim:products,Bos:product_expansions}. It might also shed light on the structure of the massless limit, or short-distance scaling limit, of integrable models, where the size of the local observable algebras is still an open problem \cite{BostelmannLechnerMorsella:2011}.

\section*{Acknowledgements}

The authors are indebted to K.-H.~Rehren for valuable suggestions and comments. We~would also like to thank M.~Bischoff, W.~Dybalski, C.~J.~Fewster, G.~Lechner and Y.~Tanimoto for helpful comments and discussions.

Work on this article was partially done during several stays at the Erwin Schr\"odinger Institute, Vienna, and the authors are grateful to ESI for hospitality and, in the case of H.~B., for financial support. The work of D.~C.~was supported by the German Research Foundation (Deutsche Forschungsgemeinschaft (DFG)) through 
the Institutional Strategy of the University of G\"ottingen. H.~B.~would like to thank the University of G\"ottingen for an invitation.


\bibliographystyle{abbrv} 
\bibliography{../../integrable}

\end{document}